\documentclass[lettersize,journal]{IEEEtran}
\usepackage{amsmath,amsfonts}
\usepackage{array}
\usepackage[caption=false,font=footnotesize,labelfont=rm,textfont=rm]{subfig}
\usepackage{textcomp}
\usepackage{stfloats}
\usepackage{url}
\usepackage{verbatim}
\usepackage{graphicx}
\usepackage{cite}
\usepackage{multirow}
\usepackage{amssymb}
\usepackage{booktabs}
\usepackage[ruled,vlined,linesnumbered]{algorithm2e}
\SetKw{Continue}{continue}
\SetKwProg{Fn}{Procedure}{}{}
\newtheorem{theorem}{\bf Theorem}[section]
\SetKwInput{Input}{Input}
\SetKwInput{Output}{Output}
\newcolumntype{B}{>{\bfseries}c}
\begin{document}

\title{EviDC: A Violation-Guided Algorithm for Incremental Denial Constraint Discovery}

\author{Qian Zhou, Xixian Han, Xiaolong Wan 
\thanks{Qian Zhou, Xixian Han and Xiaolong Wan are with the School of Computer Science and Technology, Harbin Institute of Technology (HIT), China. (e-mail: 25S130328@stu.hit.edu.cn, hanxx@hit.edu.cn, wxl@hit.edu.cn).}
}


\maketitle

\begin{abstract}
Denial Constraints (DCs) are an important class of integrity constraints and have been widely used in data quality management. In dynamic datasets, newly inserted tuples may invalidate existing DCs and require the constraint set to be updated. Existing incremental DC discovery methods still generate a large amount of intermediate evidence because they do not exploit the structural information of existing DCs during evidence construction. We propose EviDC, a violation-guided incremental DC discovery algorithm. EviDC organizes existing DCs into a prefix tree structure called DCTrie, in which each path from the root to a leaf represents a potential violation path. During incremental processing, evidence is expanded only along reachable violation paths, while irrelevant branches are pruned as early as possible. We evaluate EviDC on real-world and synthetic datasets. The results show that EviDC reduces intermediate evidence and improves runtime efficiency in most scenarios. The performance gain becomes more pronounced as the insertion ratio and dataset size increase, showing the effectiveness and scalability of violation-guided evidence construction.
\end{abstract}

\begin{IEEEkeywords}
Denial constraints, Data profiling, Integrity constraints
\end{IEEEkeywords}

\section{Introduction}
\IEEEPARstart{D}{enial} constraints (DCs) are an important type of integrity constraints (ICs)~\cite{silberschatz2002database} in relational databases. They can express a wide range of complex semantic rules while remaining computationally manageable~\cite{chu2013discovering}. Due to their strong expressive power, DCs have been widely used in various data management tasks, including query optimization~\cite{kossmann2022data,pena2018mind}, data quality management~\cite{fan2007improving} and data cleaning~\cite{chu2013holistic,chu2016data}. 

A DC is defined over a set of predicates that cannot be satisfied simultaneously. A predicate is a comparison condition defined on a pair of tuples. For example, $t.ID=s.ID$ is a predicate. Taking the employee dataset De in Table~\ref{tab:employee} as an example, there are three rules that can be expressed as DCs. 

$\varphi_1:\lnot(t.ID=s.ID)$, describing the semantics that there are no employees with the same ID in De; 

$\varphi_2:\lnot(t.Level=s.Level \land t.Mgr \ne s.Mgr)$, meaning that if two employees have the same level, then their supervisors should be the same; 

$\varphi_3:\lnot(t.Hired<s.Hired \land t.Level<s.Level)$, meaning that if one employee's start date is earlier than another employee's, then their level should not be lower than the latter's.

\begin{table}
\centering
\caption{Employee Dataset De}
\label{tab:employee}
\begin{tabular}{c|ccccc}
\hline
 & ID & Name & Hired & Level & Mgr \\
\hline
$t_1$ & \#1 & Ana & 2000 & 5 & \#1 \\
$t_2$ & \#2 & Sam & 2001 & 4 & \#1 \\
$t_3$ & \#3 & Ana & 2002 & 2 & \#2 \\
$t_4$ & \#4 & Kai & 2002 & 2 & \#2 \\
$t_5$ & \#5 & Tom & 2003 & 1 & \#3 \\
\hline
\end{tabular}
\end{table}

DCs generalize several traditional integrity constraints, including unique column combinations (UCCs)~\cite{abedjan2014detecting}, functional dependencies (FDs)~\cite{huhtala1999tane}, order dependencies (ODs)~\cite{szlichta2012fundamentals} and so on. Using these traditional expressions, the above three rules require three different constraint types. $\varphi_1$ can be represented by key constraints, $\varphi_2$ is a typical example of functional dependencies, and $\varphi_3$ can be represented by order dependencies~\cite{abedjan2019data}. In addition, the importance of DCs among different forms of ICs also comes from their good tractability~\cite{chu2013discovering}.

Due to the significant application value of DCs, the DC discovery problem has been proposed to automatically discover DCs from datasets, thereby reducing the cost of manual design. Most existing DC discovery algorithms follow a two-phase framework consisting of evidence construction and DC enumeration~\cite{bian2024discovering}. In this framework, evidence records the predicates satisfied by tuple pairs and serves as the core intermediate structure. Based on this framework, several static DC discovery algorithms have been proposed, such as Hydra~\cite{bleifuss2017efficient} and DCFinder~\cite{pena2019discovery}.

In practical application scenarios, data are not static but are frequently updated through insertions. For example, e-commerce systems continuously receive new transaction records~\cite{jain2021overview}, and industrial systems continuously collect equipment operation logs~\cite{tamalu2023fault}. Suppose a newly hired employee in De is $t_6=(\#1,jack,2001,5,\#1)$. It has the same $ID$ value as $t_1$, which violates the requirement of $\varphi_1$. At this time, $\varphi_1$ is invalid. If the DCs cannot be updated and maintained in time, valid newly inserted employee records may be incorrectly flagged as violations.

A straightforward solution is to recompute the DCs in dynamic datasets by rerunning static algorithms. However, this strategy introduces a large amount of redundant computation and is costly~\cite{tan2020fast}. Accordingly, the problem of incremental DC discovery has been proposed, aiming to recognize invalid DCs and update them to resolve the violations.

To address the incremental DC discovery problem, IncDC~\cite{qian2023incremental} and 3DC~\cite{pena2024discovering} have been proposed. Although they avoid rediscovering DCs from scratch, both methods still generate a large amount of redundant evidence. They first construct incremental evidence for affected tuple pairs and then use the generated evidence to identify violated DCs and update the constraint set.

The main difference lies in how the evidence is obtained and maintained. IncDC relies on index structures to filter candidate tuple pairs, which introduces considerable memory overhead and frequent index maintenance. 3DC maintains the evidence set produced in the previous discovery round to identify newly appearing evidence, and this evidence set can be costly to construct and maintain. As a result, many intermediate evidence records may be generated or stored even though they will never lead to a DC violation.

This observation motivates a different direction. Instead of constructing evidence first and checking DCs afterward, we use the existing DCs to guide evidence construction from the beginning. Each DC already defines a potential violation path. A tuple pair violates a DC only if it satisfies all predicates on that path. For example, $\varphi_2$ corresponds to such a violation path. To violate $\varphi_2$, a tuple pair must satisfy both $t.Level=s.Level$ and $t.Mgr \ne s.Mgr$. Therefore, once the partial evidence of a tuple pair can no longer be extended to any existing DC, further evidence expansion for this tuple pair is unnecessary.

Based on this observation, we propose EviDC, a violation-guided incremental DC discovery method. We organize all potential violation paths using the existing DCs. Once a tuple pair deviates from every possible violation path, its subsequent expansion becomes unnecessary and can be pruned immediately. In this way, EviDC differs from IncDC and 3DC by embedding DC structural information directly into the evidence construction process, rather than detecting invalid DCs only after evidence has been generated.

To support violation-guided evidence construction, we design a prefix-sharing structure called DCTrie to organize potential violation paths implied by existing DCs. In DCTrie, each node corresponds to a predicate, and each root-to-leaf path represents a DC. During incremental processing, evidence is expanded only along reachable paths in DCTrie. Once a tuple pair deviates from all possible violation paths, the corresponding evidence construction process terminates immediately. In this way, DCTrie effectively reduces the generation of redundant intermediate evidence.

EviDC follows a two-stage framework consisting of incremental evidence construction and DC updating. EviDC focuses on reducing evidence construction cost, and uses the repair strategy adopted by existing incremental methods for DC update.

The main contributions of this paper can be summarized as follows:

(1) We propose EviDC, a violation-guided incremental DC discovery algorithm. EviDC incorporates the structural information of existing DCs into incremental evidence construction, so that evidence is expanded only when it may still lead to a DC violation.

(2) We design DCTrie, a prefix-sharing structure that organizes existing DCs as potential violation paths. DCTrie supports reachability checking during evidence construction and enables early pruning of irrelevant evidence contexts.

(3) We conduct extensive experiments on real-world and synthetic datasets. Results show that EviDC significantly reduces evidence size and achieves better efficiency in most scenarios, especially under large insertion ratios.
        
The remainder of this paper is organized as follows. Section~\ref{section:related_work} reviews related work on DC discovery and incremental DC discovery. Section~\ref{section:preliminaries} presents the preliminaries and problem definition. Section~\ref{section:algorithm} introduces the EviDC algorithm. Section~\ref{section:experiment} presents the experimental evaluation. Finally, Section~\ref{section:conclusion} concludes the paper.

\section{Related Work} \label{section:related_work}
In recent years, DC discovery has become an important research direction in data quality management. Based on different application scenarios, existing research can be divided into two categories: static algorithms and incremental algorithms.

Early studies on DC discovery mainly focused on static datasets. Chu et al.~\cite{chu2013discovering} first formalized denial constraints and proposed FASTDC, which established the widely adopted two-phase framework consisting of evidence construction and DC enumeration. 

Several studies focus on improving evidence construction efficiency. BFASTDC~\cite{pena2018bfastdc} optimized predicate verification through bitwise operations. DCFinder~\cite{pena2019discovery} introduced Position List Indexes (PLIs) to accelerate evidence construction. Pena et al.~\cite{pena2022fast} further introduced parallel evidence pipelines to improve evidence generation efficiency. FASTADC~\cite{xiao2022fast} reduced evidence construction cost using clue sets, while Filho et al.~\cite{marques2023discovering} applied Boolean operations to reduce the memory overhead of intermediate evidence structures.

Other studies mainly optimize the DC enumeration stage. Hydra~\cite{bleifuss2017efficient} proposed an evidence inversion strategy to reduce enumeration overhead, while ADCMiner~\cite{livshits2020approximate} adopted the state-of-the-art hitting set enumeration algorithm MMCS~\cite{murakami2013efficient} to improve DC enumeration efficiency.

In addition, there are studies that break through this framework. DCMiner~\cite{bian2024discovering} mines concise, diverse, and strongly correlated DCs based on deep reinforcement learning. DCRer~\cite{wu2025deep} also discovers DCs based on deep reinforcement learning and combines them with downstream data cleaning tasks. DCArray~\cite{marques2026discovery} performs DC discovery in a hardware pipeline based on FPGA, which significantly improves execution efficiency.

Although these methods improve different stages of static DC discovery, most of them still rely on constructing large intermediate evidence structures, which remains the fundamental bottleneck of DC discovery algorithms.

Different from static algorithms, incremental DC discovery avoids the complexity in global computation by processing the updated parts of the data. This incremental problem has been studied in other constraint types, including UCCs~\cite{abedjan2014detecting}, FDs~\cite{schirmer2019dynfd,xiao2022dynamic,caruccio2019incremental} and so on. In DC discovery, current incremental methods include IncDC~\cite{qian2023incremental} and 3DC~\cite{pena2024discovering}. Overall, they can be viewed as following a two-stage framework: first constructing incremental evidence, and then performing DC verification and update.

IncDC is the first method for incremental DC discovery. In the evidence construction stage, IncDC builds novel index structures from the original data and existing DCs, especially over highly selective predicate combinations, to filter candidate tuple pairs and accelerate incremental evidence generation. In the DC update stage, the generated evidence is used to identify invalid DCs and derive repaired constraints by extending violated DCs with additional predicates, followed by validity and redundancy checking. Although IncDC avoids rediscovering DCs from scratch, its efficiency heavily depends on the maintenance of complex index structures.

3DC is an incremental method built on an existing DC discovery algorithm~\cite{pena2022fast}. It introduces a parallel working pipeline based on evidence context. 3DC reuses the evidence set $E_r$ obtained from the previous discovery round. For tuple insertions, it first constructs the incremental evidence set $E_{\Delta r}$, derives newly appearing evidence by computing $E^{inc}=E_{\Delta r}\setminus E_r$, and then updates the DCs through dynamic DC enumeration. Under this framework, the maintenance of DCs is transformed into a matching problem between incremental evidence and the existing DCs. Unlike IncDC, 3DC adopts separate processing strategies for insertions and deletions, addressing IncDC's limited support for deletion scenarios. However, 3DC relies on the availability of the complete evidence set $E_r$, whose construction and maintenance can be costly.

Our work differs in the following. Both IncDC and 3DC still construct evidence independently from the structural information of existing DCs. We organize existing DCs as potential violation paths. We also directly incorporate DC structural information into the evidence construction stage. EviDC constrains evidence expansion to paths that may still lead to violations, rather than generating evidence first and verifying violations afterward.

\section{Preliminaries} \label{section:preliminaries}
Let $R(A,B,C,\ldots)$ denote a relational schema, where $A,B,C$ are attributes in the relation. $r$ denotes an instance of $R$, representing a set of tuples in this schema. Let $|R|$ denote the number of attributes in $R$. For any tuple $t\in r$, $t.A$ is the projection of tuple $t$ on attribute $A$.

{\bf{Predicates: }}predicates describe the comparison relationship between two tuples over a given attribute or a comparison between a tuple's attribute value and a constant. For any two tuples $t,s \in r$, a predicate can be defined as:
$$P(t,s): t.A \ op \ s.B\quad or \quad t.A \ op\ c$$

where $op \in \{<,\leq,>,\geq,=,\ne\}$, $c$ is a constant. For example, $P:t.name=s.name$ is a predicate, indicating that two employees have the same name. This paper focuses on comparisons between two different tuples, considering only the case of $t\ne s$.

{\bf{Denial constraints: }}a denial constraint is the negation form of a set of predicate conjunctions. Given a set of predicates, DC can be defined as:
$$\varphi: \forall t, s \in r, \neg (P_1(t, s) \land P_2(t, s) \land \dots \land P_k(t, s))$$

In other words, a DC is a predicate set that cannot be satisfied simultaneously. If a DC does not contain any predicates with constant values, it is called a variable denial constraint (VDC). Otherwise, it is called a constant denial constraint (CDC). If a DC $\varphi$ holds on $r$, we write $\varphi \models r$.

Based on the above definitions, we further discuss several basic properties of DCs.
(1) Triviality: A DC is said to be trivial if it is satisfied by any instance. For example, $\lnot(t.ID=s.ID \land t.ID \ne s.ID)$ contains two predicates that can never be true simultaneously. In this case, it loses its semantics, so we only consider nontrivial DCs. 

(2) Symmetry: For a DC $\varphi = \neg (P_1(t, s) \land P_2(t, s) \land \dots \land P_k(t, s))$, its symmetric form is defined as $\varphi^{sym} = \neg (P_1(s,t) \land P_2(s,t) \land \dots \land P_k(s,t))$. A symmetric DC can be obtained by swapping the order of tuple pairs in a DC. The two are semantically equivalent, except that the order of the tuple pairs has been swapped. The symmetric DC of $\varphi_3$ is $\varphi_4:\lnot(t.Hired>s.Hired \land t.Level>s.Level)$.

(3) Augmentation: If $\lnot(P_1\land\ldots\land P_n)$ is valid, then $\lnot(P_1\land\ldots\land P_n\land Q)$ is also valid.

This means that if a valid DC is extended by adding a predicate, it remains valid. For example, if $\varphi_1:\lnot(t.ID = s.ID)$ holds, then another DC, $\lnot(t.ID = s.ID\land t.Name=s.Name)$ also holds. Therefore, the objective of DC discovery is to find minimal DCs.

{\bf{Predicate Space: }}Given a relational schema $R$, its predicate space is denoted by $P$, representing all possible predicates. For categorical data, attributes can only be compared using equality or inequality operators, i.e., $op\in\{=,\ne\}$, and two different predicates can be generated for the same attribute. For numerical data, $op\in\{=,\ne,>,\geq,<,\leq\}$, and six predicates can be generated for the same attribute.

Taking the employee dataset as an example, its corresponding predicate space is shown in Table~\ref{tab2}.
\begin{table}[!t]
\centering
\caption{Employee Predicate Space}
\label{tab2}
\resizebox{\columnwidth}{!}{
\begin{tabular}{lll}
\hline
$P_1: t.ID = s.ID$        & $P_2: t.ID \ne s.ID$       & $P_3: t.Name = s.Name$ \\
$P_4: t.Name \ne s.Name$  & $P_5: t.Hired < s.Hired$   & $P_6: t.Hired \le s.Hired$ \\
$P_7: t.Hired > s.Hired$  & $P_8: t.Hired \ge s.Hired$ & $P_9: t.Hired = s.Hired$ \\
$P_{10}: t.Hired \ne s.Hired$ & $P_{11}: t.Level < s.Level$ & $P_{12}: t.Level \le s.Level$ \\
$P_{13}: t.Level > s.Level$   & $P_{14}: t.Level \ge s.Level$ & $P_{15}: t.Level = s.Level$ \\
$P_{16}: t.Level \ne s.Level$ & $P_{17}: t.Mgr = s.Mgr$       & $P_{18}: t.Mgr \ne s.Mgr$ \\
\hline
\end{tabular}
}
\end{table}

{\bf{Evidence: }}The concept of evidence is proposed for recording how tuple pairs satisfy predicates in the predicate space. For any two distinct tuples $(t,s)\in r$, the evidence corresponding to this tuple pair is defined as:
$$Evi(t,s)=\{p\in P(R)\mid p(t,s)=true\}$$
Thus, evidence is essentially the set of all predicates that are satisfied by a tuple pair. The evidence set over the whole dataset is defined as the collection of evidence generated by all tuple pairs:
$$E(r)=\{Evi(t,s)| t,s\in r,t\neq s\}$$

Based on evidence, the validity of a DC can be defined as follows:

A DC $\varphi=\lnot(P_1\land P_2\land \cdots \land P_k)$ is valid if $\forall Evi\in E(r),P_1,\ldots,P_k \nsubseteq Evi$

A DC $\varphi$ holds on a relational instance if and only if there does not exist any evidence that contains all predicates in $\varphi$. Conversely, if some evidence contains all predicates in $\varphi$, then this evidence violates the DC. For example, $(t_6,t_1)$ in De corresponds to the evidence $E_1=\{P_1,P_4,P_5,P_6,P_{10},P_{12},P_{14},P_{15},P_{17}\}$. $\varphi_1$ is a subset of $E_1$, that is, all predicates in $\varphi_1$ are contained in $E_1$. We can say that the tuple pair $(t_6,t_1)$ violates $\varphi_1$.

Based on the above core concepts, we next give a formal description of the DC discovery problem.

{\bf{Static DC discovery: }} Given a relational instance $r$, the goal is to find the set $\Sigma$ of DCs that hold on $r$.

{\bf{Incremental DC discovery: }}
Given an initial dataset $r$, an original DC set $\Sigma$, and inserted tuples $\Delta r$, the goal is to identify the set of invalid DCs $\Delta\Sigma^{-}$ and incrementally repair them to maintain the validity of the DCs.

In incremental DC discovery, incremental evidence is constructed for affected tuple pairs. These affected tuple pairs are denoted as $(t,t')$, where at least one tuple belongs to $\Delta r$. If some incremental evidence covers a DC $\varphi$, then $\varphi$ becomes invalid and should be removed.

\section{EviDC Algorithm} \label{section:algorithm}

\subsection{Overview of Algorithm}
Incremental DC discovery aims to efficiently maintain the validity of DCs under data updates. We propose EviDC, which applies a violation-guided evidence construction method that directly constrains evidence expansion using the structural information of existing DCs. We regard each DC as a potential violation path composed of a sequence of predicates. Based on this, the algorithm can more efficiently identify affected DCs while reducing the generation of irrelevant evidence.

To support this process, we design a tree structure called DCTrie, which compactly organizes shared predicates and potential violation paths. During incremental processing, evidence is expanded only along reachable paths in DCTrie, while branches that can no longer lead to violations are pruned immediately.

As shown in Figure~\ref{fig:overview}, EviDC can be divided into three main parts: DCTrie construction, violation-guided evidence construction, and DC update.

\begin{figure}
\centering
\includegraphics[width=3.4in]{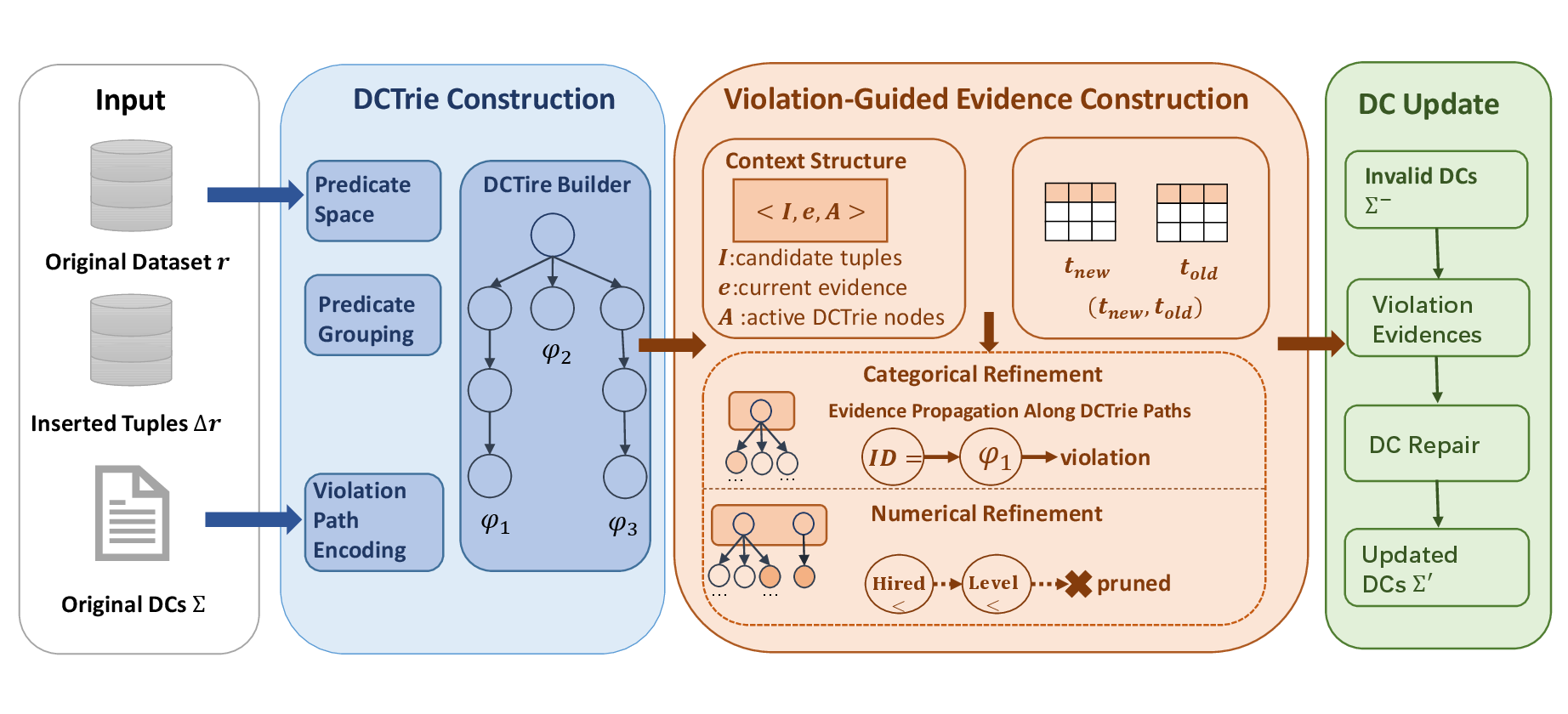}
\caption{Overview of EviDC and its components.}
\label{fig:overview}
\end{figure}

Algorithm~\ref{alg:evidc} presents the overall process of EviDC. First, the algorithm constructs the predicate space and divides the predicates into groups. Then, it builds the DCTrie and constructs the violation evidence set and the affected DCs. Finally, the algorithm removes the detected invalid DCs and generates repaired DC candidates to eliminate violations.

\begin{algorithm}[!t]
\caption{\textsc{EviDC}}
\label{alg:evidc}

\Input{A relational database instance $r$, inserted tuples $\Delta r$, and an existing DC set $\Sigma$}
\Output{The maintained DC set $\Sigma'$}

Construct the predicate space $P$ for $r$\;
Divide the categorical groups $G_{cat}$ and numerical groups $G_{num}$\;
Build predicate indexes\;
Sort $G_{cat}$ and $G_{num}$\;
$G \leftarrow \langle G_{cat},G_{num}\rangle$\;
$T \leftarrow \textsc{DCTrieBuilder}(\Sigma,P,G)$\;
$(E_{\Delta},\Sigma^{-}) \leftarrow \textsc{IncreEviConstruction}(r,\Delta r,T,P,G)$\;

\If{$\Sigma^{-}=\emptyset$}{
    \Return{$\Sigma$}\;
}

\ForEach{$\varphi\in\Sigma^{-}$}{

    $E_{\varphi}\leftarrow$ evidence subset that violates $\varphi$\;

    $P_{add}\leftarrow P \setminus$ all predicates that appear in $E_{\varphi}$\;

    \ForEach{$p\in P_{add}$}{

        Generate a new DC $\varphi'\leftarrow\varphi\cup\{p\}$\;

        \If{$\varphi'$ is valid and non-trivial}{

            Add $\varphi'$ into $\Sigma_r$\;
        }
    }
}

$\Sigma'\leftarrow(\Sigma\setminus\Sigma^{-})\cup\Sigma_r$\;
\Return{$\Sigma'$}\;

\end{algorithm}

\subsection{DCTrie Structure}
This paper adopts a prefix tree DCTrie to uniformly represent all existing DCs. The basic idea of DCTrie is to organize the predicates in DCs into tree nodes according to a fixed attribute order. It merges and represents the constraint paths that have the same prefix. Each DC is mapped to a path from the root node to the leaf node to correspond to the violation path.

To ensure that predicates over the same attribute appear in the same layer, the predicate space needs to be grouped first. For attribute $A$, define its corresponding predicate group $G_A \subseteq P$ as the set of all predicates acting on attribute $A$, where $P$ is the predicate space. All predicate groups form a partition $G = G_1,\ldots, G_m$. Predicates in the same group are mutually exclusive. They are divided into different nodes in a tree's layer. For categorical attributes, the same attribute generates only equal and unequal predicates, such as $P_1: t.ID = s.ID$ and $P_2: t.ID \ne s.ID$. A tuple pair can only satisfy one of them, so they correspond to different nodes in the same layer. For numerical attributes, since there are implications among the six comparison operations, they cannot be regarded as six completely independent directions. Based on the conditions that are simultaneously satisfied, they are divided into three directions: $\{=,\ge,\le\}$, $\{>,\ge,\ne\}$, and $\{<,\le,\ne\}$. Each layer of the tree represents a selection branch on the attribute group. Different nodes in the same layer correspond to different predicates in the attribute group.
 
After predicate grouping, it is also necessary to determine the order of attribute groups in DCTrie. This order directly affects the pruning effect. To eliminate the tuple pairs that are unlikely to lead to violations as early as possible, we prioritize placing the attribute groups with stronger discrimination ability at shallower layers. We sort the attribute groups based on the diversity of attribute values and prioritize processing categorical attributes, then numerical attributes. This is because categorical attributes usually correspond to only a few discrete predicate branches, and their comparison results are easier to obtain. If early branches are available, the algorithm can avoid excessive branching in numerical attributes. For the same type of attribute groups, we further determine the order based on the discrimination ability of attribute values. 

Based on the above design, we formally define DCTrie as follows:
$$T=(V,E,root)$$

where $V$ represents the nodes, $E$ represents the edge set, $root$ represents the empty predicate node. DCTrie satisfies the following properties:
\begin{enumerate}
\item{The root node does not correspond to any specific predicate, and its semantics is the empty prefix.}
\item{Except for the root node, each node corresponds to a predicate or a set of predicates that can all be satisfied simultaneously.}
\item{Each layer of the tree corresponds to a predicate group. Each leaf node corresponds to a DC.}
\item{If a DC does not contain any predicates in the attribute group $G_i$, then a wildcard node "*" is introduced at the i-th layer of the DCTrie. This special node is called a wildcard node (WildcardChild) and is used to represent that the current DC does not impose any constraints on the attribute group at that layer.}
\item{Leaf nodes store the DCs corresponding to their paths, which are used for DC update.}
\end{enumerate}

In DCTrie, any path from the root node to a leaf node is regarded as a violation path. If all predicates on this path are satisfied, then the DC corresponding to this path is violated.

To ensure the completeness of the matching of the violation paths, DCTrie also needs to consider the symmetric form of a DC during its construction. For a DC $\varphi \in \Sigma$, its symmetric form $\varphi^{sym}$ can be obtained by exchanging the tuple order. The same violation may be caused by a tuple pair $(t,s)$ or $(s,t)$. They are the same tuple pair but have different evidence. To avoid missing the violation caused by tuple swapping, when constructing DCTrie, both the DC and its symmetric form need to be inserted into the tree simultaneously.

Algorithm~\ref{alg:dctrie_builder} presents the construction process of DCTrie. The insertion process of a single DC is recursively executed in predicate group order. For the predicate group corresponding to the current layer, if the DC contains a predicate in this group, it expands downward along the corresponding predicate node. If it does not contain a predicate, it recursively follows the wildcard node until all predicate groups are processed and records the DC at the leaf node.

\begin{algorithm}[!t]
\caption{\textsc{DCTrieBuilder}}
\label{alg:dctrie_builder}

\Input{$\Sigma$, predicate space $P$, sorted attribute group $G=\langle G_1,\ldots,G_m\rangle$}

\Output{DCTrie $T$}

Create the root node of $T$\;

\ForEach{$\varphi\in\Sigma$}{

    \textnormal{\textsc{InsertDC}}$(root,\varphi,G,1)$\;

    \If{$\varphi^{sym}\neq\varphi$}{

        \textnormal{\textsc{InsertDC}}$(root,\varphi^{sym},G,1)$\;
    }
}

\Return{$T$}\;

\vspace{1mm}

\Fn{\textnormal{\textsc{InsertDC}}$(node,\varphi,G,i)$}{

    \If{$i>|G|$}{

        mark $node$ as a leaf node and associate with $\varphi$\;

        \Return\;
    }

    $S\leftarrow$ the set of predicates of $\varphi$ belonging to $G_i$\;

    \If{$S=\emptyset$}{

        $child\leftarrow$ wildcard child of $node$\;

        \textnormal{\textsc{InsertDC}}$(child,\varphi,G,i+1)$\;
    }

    \Else{

        \ForEach{$p\in S$}{

            $child\leftarrow$ child node of $node$ labeled by $p$\;

            \textnormal{\textsc{InsertDC}}$(child,\varphi,G,i+1)$\;
        }
    }

    \Return\;
}

\end{algorithm}

For the De dataset, we build the DCTrie using the attribute order $ID$, $Name$, $Mgr$, $Hired$, and $Level$. The resulting structure is shown in Figure~\ref{fig:dctrie}. Here, the wildcard node $*$ indicates that the corresponding DC has no predicate constraints on the attribute group at this layer. To simplify the notation, we use $A=$ to denote the predicate $t.A=s.A$. The green nodes represent the paths formed by the symmetric insertion of DCs. 

\begin{figure}[!t]
\centering
\includegraphics[width=3.3in]{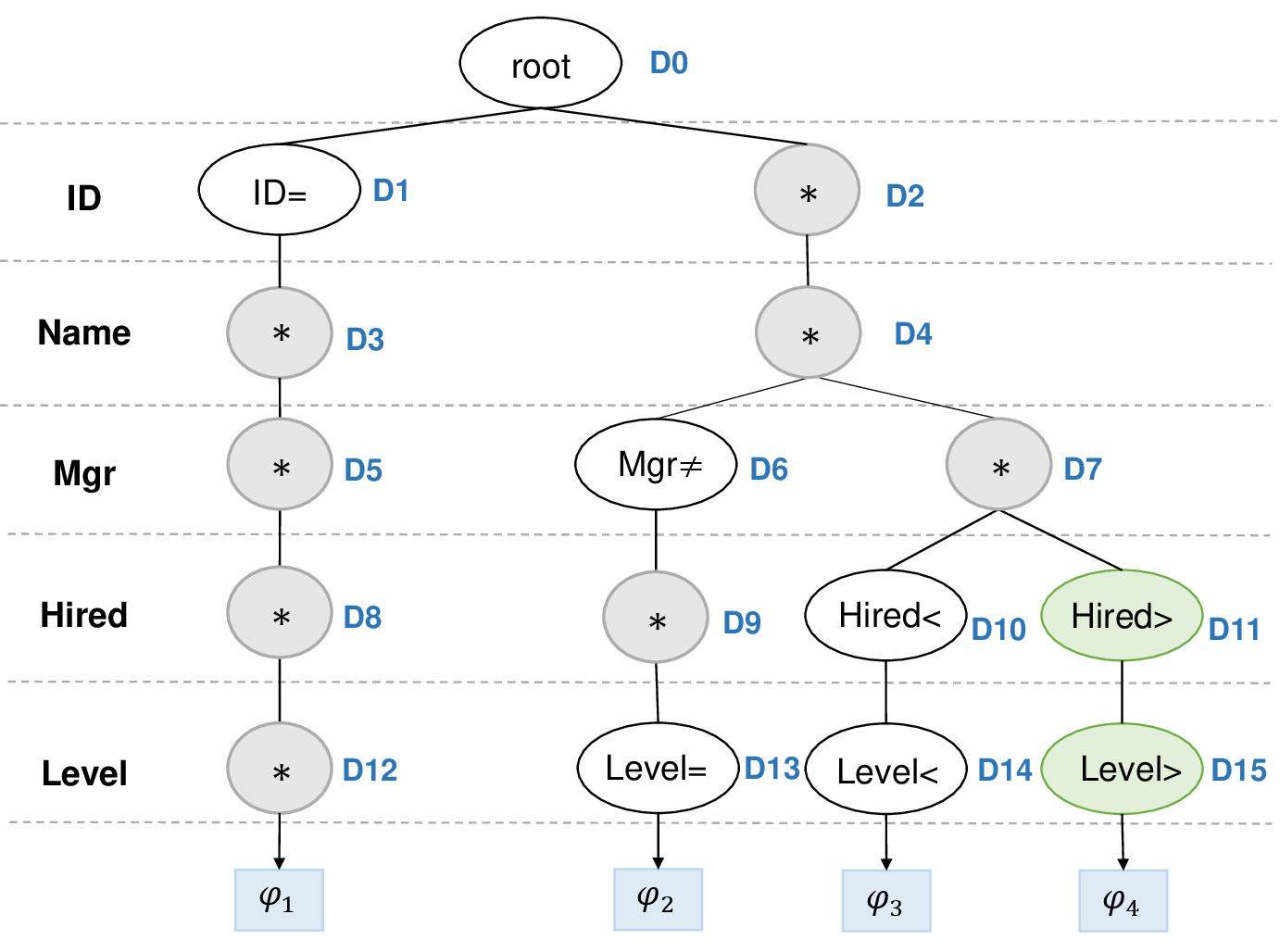}
\caption{Structure of DCTrie.}
\label{fig:dctrie}
\end{figure}

\subsection{Incremental Evidence Construction Based on DCTrie}
After building DCTrie, we utilize it to guide the evidence construction. When handling new tuples, the predicate refinement process is directly advanced along the valid path in DCTrie and dynamically prunes redundant branches. Thus, the algorithm does not need to explicitly generate a large amount of irrelevant evidence, thereby reducing the size of intermediate results.

In order to merge the same evidence as much as possible, we introduce a structure called Evidence Context~\cite{pena2022fast} to store the corresponding evidence for tuple pairs. It is an intermediate data structure, recording the evidence and relevant context information. We also maintain an active node set in each evidence context to provide pruning information. For an inserted tuple $t_{new}$, the evidence context is defined as follows:
{\small
$$Evidence\ Context=\{t_{new},RightTuples,Evidence,Active\}$$
}

Here, $RightTuples$ denotes the set of tuples that have the same evidence as $t_{new}$. $Evidence$ records the predicates jointly satisfied by these tuple pairs. $Active$ denotes the set of currently reachable DCTrie nodes. This serves as the basic unit in evidence construction. An evidence context can be viewed as a group of tuple pairs that have not yet been pruned. These tuple pairs share the same partial evidence and path state at a certain stage.

For each inserted tuple, the algorithm first builds an initial context. In the initial context, $RightTuples$ is initialized as $r\cup\Delta r\setminus\{t_{new}\}$, representing all tuples that may form tuple pairs with $t_{new}$. $Evidence$ is initialized using the default predicate directions of each attribute group, which may not actually be satisfied. The $Active$ set is initialized as the root node.

After context initialization, the algorithm progressively refines evidence contexts following the hierarchical order of DCTrie. At the $i$-th layer, only predicates belonging to the current predicate group $G_i$ are considered. Tuple pairs are divided into different sub-contexts according to the predicates they satisfy in the current group, which is referred to as context splitting. The process of updating the current evidence according to the satisfied predicates is called refinement. After refinement, tuple pairs in different sub-contexts still share the same evidence before the current layer, but correspond to different predicate selections at the current layer.

During refinement, the Active set is propagated simultaneously to maintain all reachable DCTrie paths. For a refinement predicate $p$, the algorithm retains two types of reachable nodes: child nodes labeled by p and inherited wildcard nodes. The resulting Active set therefore contains all DCTrie nodes that remain reachable after the current refinement step.

If a context's $Active$ set is empty after refinement, it indicates that this context cannot match any complete DCTrie path. Therefore, the tuple pairs from the context cannot violate any DCs. At this point, the subsequent refinement process of this context is terminated without further verification of the remaining predicates.

The order of attribute groups not only affects the structural organization of DCTrie, but also directly influences the efficiency of refining the evidence context. For an inserted tuple, its initial context usually corresponds to a larger set of right tuples. If we prioritize the more discriminative attribute group, then this right tuples set can shrink rapidly in the early stage, thereby reducing the subsequent splitting of attribute groups. Taking the insertion of data $t_6$ as an example, the initial right tuples are $\{t_1,t_2,t_3,t_4,t_5\}$. First, the refinement is carried out at the ID layer, and the tuples that satisfy $t.ID = s.ID$ are only $t_1$. The remaining tuples fall into another sub-context. At this time, for the sub-context containing only $t_1$, since its size has significantly shrunk, no further splitting occurs in the subsequent layers and no comparison operations are required.

To avoid excessive fragmentation of evidence contexts, EviDC initializes evidence using the default predicate direction of each attribute group. These directions make up the initial evidence. For categorical attributes, initially the default direction is set to be unequal. For numerical attributes, predicates that satisfy $\{>,\ge,\ne \}$ are adopted. The purpose is to enable most tuple pairs to continue propagating along a unified default path, while only a small subset of tuple pairs needs to be corrected. For the $ID$ attribute, the tuple pairs satisfying $P_1:t.ID = s.ID$ are only $(t_6,t_1)$, so only one tuple pair needs to be corrected to an equality predicate.

Using the inserted tuples in Table~\ref{tab:employee_inserted} as an example, the initial evidence context constructed for $t_8$ is: $C_0(t_8)=\big\langle\,\{t_1,t_2,t_3,t_4,t_5,t_6,t_7\},\{P_2,P_4,P_7,P_8,\ldots,P_{18}\},\{\mathrm{root}\}\big\rangle$.

\begin{table}[!t]
\centering
\caption{Newly Inserted Tuples in Employee Dataset}
\label{tab:employee_inserted}
\begin{tabular}{c|ccccc}
\hline
 & ID & Name & Hired & Level & Mgr \\
\hline
$t_6$ & \#1 & Jack & 2001 & 5 & \#1 \\
$t_7$ & \#6 & Ema  & 2003 & 3 & \#2 \\
$t_8$ & \#7 & Wile & 2002 & 1 & \#2 \\
\hline
\end{tabular}
\end{table}

Algorithm~\ref{alg:incremental_evidence_construction} presents the overall incremental evidence construction procedure. For each newly inserted tuple, the algorithm first initializes a context set, then sequentially calls Algorithm~\ref{alg:update_categorical_context} and Algorithm~\ref{alg:update_numerical_context} to refine the categorical attribute group and the numerical attribute group, respectively. After all attribute groups have been processed, it collects the final evidence and the invalid DCs.

\begin{algorithm}[!t]
\caption{IncreEviConstruction}
\label{alg:incremental_evidence_construction}

\Input{$r$, $\Delta r$, DCTrie $T$, $P$, predicate groups $G$}

\Output{Incremental evidence set $E_{\Delta}$, invalid DCs $\Sigma^{-}$}

Initialize $E_{\Delta}\leftarrow \emptyset$ and $\Sigma^{-}\leftarrow \emptyset$\;

Construct the base evidence $Evi$\;

\ForEach{inserted tuple $t_i\in\Delta r$}{
    $I_r \leftarrow r \cup (\Delta r \setminus {t_i})$\;

    $C\leftarrow\{\langle I_r,Evi,\{T.root\}\rangle\}$\;

    \ForEach{categorical group $g\in G_{cat}$}{
    
        $C\leftarrow UpdateCategoricalContext(C,t_i,g,T)$\;
    }

    \ForEach{numerical group $g\in G_{num}$}{
    
        $C\leftarrow UpdateNumericalContext(C,t_i,g,T)$\;
    }

    $(E_i,\Sigma_i^{-})\leftarrow CollectEvidence(C,T)$\;

    $E_{\Delta}\leftarrow E_{\Delta}\cup E_i$\;

    $\Sigma^{-}\leftarrow \Sigma^{-}\cup \Sigma_i^{-}$\;
}

\Return{$(E_{\Delta},\Sigma^{-})$}\;

\end{algorithm}

We now describe in detail the context refinement and pruning on categorical attribute groups. For a categorical attribute group, the same attribute involves both equality predicates and inequality predicates. Since we initially assume that a tuple pair satisfies the inequality predicate, tuples that satisfy the equality relationship need to be removed from the right tuple set. At this point, we perform context splitting: one sub-context corresponds to tuple pairs that satisfy the equality predicate, and the other corresponds to tuple pairs that remain on the inequality branch. For the former, we need to update the evidence by replacing the inequality predicate with the equality predicate, and then advance the active nodes along the equality branch. For the latter, we keep the original evidence and only need to update the set of active nodes. If the active node set becomes empty, it means that the current context cannot possibly form a violation, and should be pruned.

\begin{algorithm}[!t]
\caption{\textsc{UpdateCategoricalContext}}
\label{alg:update_categorical_context}

\Input{The current context set $C$, inserted tuple $t_i$, categorical predicate group $g$, and DCTrie $T$}

\Output{Updated context set $C'$}

$p_{eq},p_{neq}\leftarrow$ the equality and inequality predicates in $g$\;

$I_{eq}\leftarrow$ the tuple ID set with the same value as $t_i$ on the attribute of $g$\;

\If{$I_{eq}=\emptyset$}{ \label{line:nullI_eq}

    Initialize $C'\leftarrow\emptyset$\;

    \ForEach{$\langle I,e,A\rangle\in C$}{

        $A'\leftarrow T.\textsc{Advance}(A,p_{neq})$\;

        \If{$A'\neq\emptyset$}{

            Add $\langle I,e,A'\rangle$ into $C'$\;
        }
    }

    \Return{$C'$}\; \label{line:handleI_eq}
}

Initialize $C_{add}\leftarrow\emptyset$ and $C_{keep}\leftarrow\emptyset$\;

\ForEach{$\langle I,e,A\rangle\in C$}{

    $I_{match}\leftarrow I\cap I_{eq}$\;

    $I_{rem}\leftarrow I\setminus I_{match}$\;

    $A_{eq}\leftarrow T.\textsc{Advance}(A,p_{eq})$\;

    $A_{neq}\leftarrow T.\textsc{Advance}(A,p_{neq})$\;

    \If{$A_{neq}\neq\emptyset$ and $I_{rem}\neq\emptyset$}{

        Add $\langle I_{rem},e,A_{neq}\rangle$ into $C_{keep}$\;
    }

    \If{$A_{eq}\neq\emptyset$ and $I_{match}\neq\emptyset$}{

        $e'\leftarrow e\oplus\textsc{EqCategoricalMask}(p_{eq})$\;

        Add $\langle I_{match},e',A_{eq}\rangle$ into $C_{add}$\; 
    }
}

$C'\leftarrow C_{keep}\cup C_{add}$\;\label{line:merge_categorical_context}

\Return{$C'$}\;

\end{algorithm}

Algorithm~\ref{alg:update_categorical_context} presents the refinement process for categorical attribute groups. We define $EqCategoricalMask(p_{eq})$ as the evidence update mask associated with the current equality predicate, which is used to update the evidence accordingly. Based on the value of $t_i$ on the current attribute, the algorithm retrieves the candidate tuple set $I_{eq}$ with the same attribute value. Lines \ref{line:nullI_eq}-\ref{line:handleI_eq} handle the case when $I_{eq}$ is empty. $T.advance$ denotes the operation that advances from the current DCTrie nodes and returns the set of reachable nodes. When $I_{eq}$ is not empty, the algorithm processes each context separately. It calculates $I_{match}=I \cap I_{eq}$ and $I_{rem}=I\backslash I_{match}$. The original context is then split into two sub-contexts accordingly. Line \ref{line:merge_categorical_context} merges the new context to obtain the updated context set $C^\prime$, which serves as input for the next attribute group.
 
\begin{algorithm}[!t]
\caption{\textsc{UpdateNumericalContext}}
\label{alg:update_numerical_context}

\Input{The current context set $C$, inserted tuple $t_i$, numerical predicate group $g$, and DCTrie $T$}

\Output{Updated context set $C'$}

$p_{eq},p_{lt},p_{gt}\leftarrow$ the equality, less-than, and greater-than predicates in $g$\;

$I_{eq}\leftarrow$ the tuple ID set with the same value as $t_i$ on the attribute of $g$\;

$I_{lt}\leftarrow$ the tuple ID set with a greater value than $t_i$ on the attribute of $g$\;

Initialize $C_{add}\leftarrow\emptyset$ and $C_{keep}\leftarrow\emptyset$\;

\ForEach{$\langle I,e,A\rangle\in C$}{

    $A_{eq}\leftarrow T.\textsc{Advance}(A,p_{eq})$\;

    $A_{lt}\leftarrow T.\textsc{Advance}(A,p_{lt})$\;

    $I_{eq}^{c}\leftarrow I\cap I_{eq}$\;

    \If{$I_{eq}^{c}\neq\emptyset$ and $A_{eq}\neq\emptyset$}{

        $e_{eq}\leftarrow e\oplus\textsc{EqRangeMask}(p_{eq})$\;

        Add $\langle I_{eq}^{c},e_{eq},A_{eq}\rangle$ into $C_{add}$\;
    }

    $I\leftarrow I\setminus I_{eq}^{c}$\;

    \If{$I=\emptyset$}{

        \Continue\;
    }

    $I_{lt}^{c}\leftarrow I\cap I_{lt}$\;

    $I\leftarrow I\setminus I_{lt}^{c}$\;

    \If{$I_{lt}^{c}\neq\emptyset$ and $A_{lt}\neq\emptyset$}{

        $e_{lt}\leftarrow e\oplus\textsc{LtRangeMask}(p_{lt})$\;

        Add $\langle I_{lt}^{c},e_{lt},A_{lt}\rangle$ into $C_{add}$\;
    }

    \If{$I\neq\emptyset$}{

        $A_{gt}\leftarrow T.\textsc{Advance}(A,p_{gt})$\;

        \If{$A_{gt}\neq\emptyset$}{

            Add $\langle I,e,A_{gt}\rangle$ into $C_{keep}$\;
        }
    }
}

$C'\leftarrow C_{keep}\cup C_{add}$\;

\Return{$C'$}\;

\end{algorithm}

Algorithm~\ref{alg:update_numerical_context} presents the refinement process for numerical attribute groups. For a numerical group, we define a set of predicate masks to replace the default state of the current evidence with the true predicate. $EqRangeMask(p_{eq})$ updates the evidence for the equality branch. $LtRangeMask(p_{lt})$ updates it for the less-than branch. For each context, the algorithm first extracts the subset of tuples that satisfy the equality relation and propagates them along the equality predicate branch. If the corresponding active node set is not empty, it uses $EqRangeMask(p_{eq})$ to correct the evidence and generate a new sub-context of the equality direction. These tuples are then removed from the current right tuples. Subsequently, the algorithm extracts subsets of the remaining tuples that satisfy the less-than relationship and propagates along the less-than branch. Finally, the algorithm combines all the newly generated equalities, less-than branch contexts, and the retained default branch contexts to obtain the updated context set $C^\prime$.

We now explain why symmetric DCs need to be inserted. For DCs that only contain categorical attributes, their symmetric form is equivalent to the original form, so whether to insert it has no impact on the result. To reduce memory space, we only initialize the tuple pairs in the $(t_{new}, t_{old})$ and $(t_{new}, t_{new}^\prime)$ directions, and then symmetrize the obtained evidence, resulting in the tuple pairs corresponding to $(t_{old}, t_{new})$ and $(t_{new}^\prime, t_{new})$. For DCs that contain numerical attributes, if only a single direction of the path is retained in the DCTrie, information in the opposite direction may be lost during refinement. This can cause some contexts that should continue to propagate to be pruned incorrectly. For example, when $(t_7, t_3)$ is refined to the $D_7$ node, it should continue to move in the direction of $t_7.Hired > t_3.Hired$. If the green node is not added, the set of forward nodes will be empty. Therefore, this evidence context will be incorrectly pruned.

\begin{figure}
\centering
\includegraphics[width=3.3in]{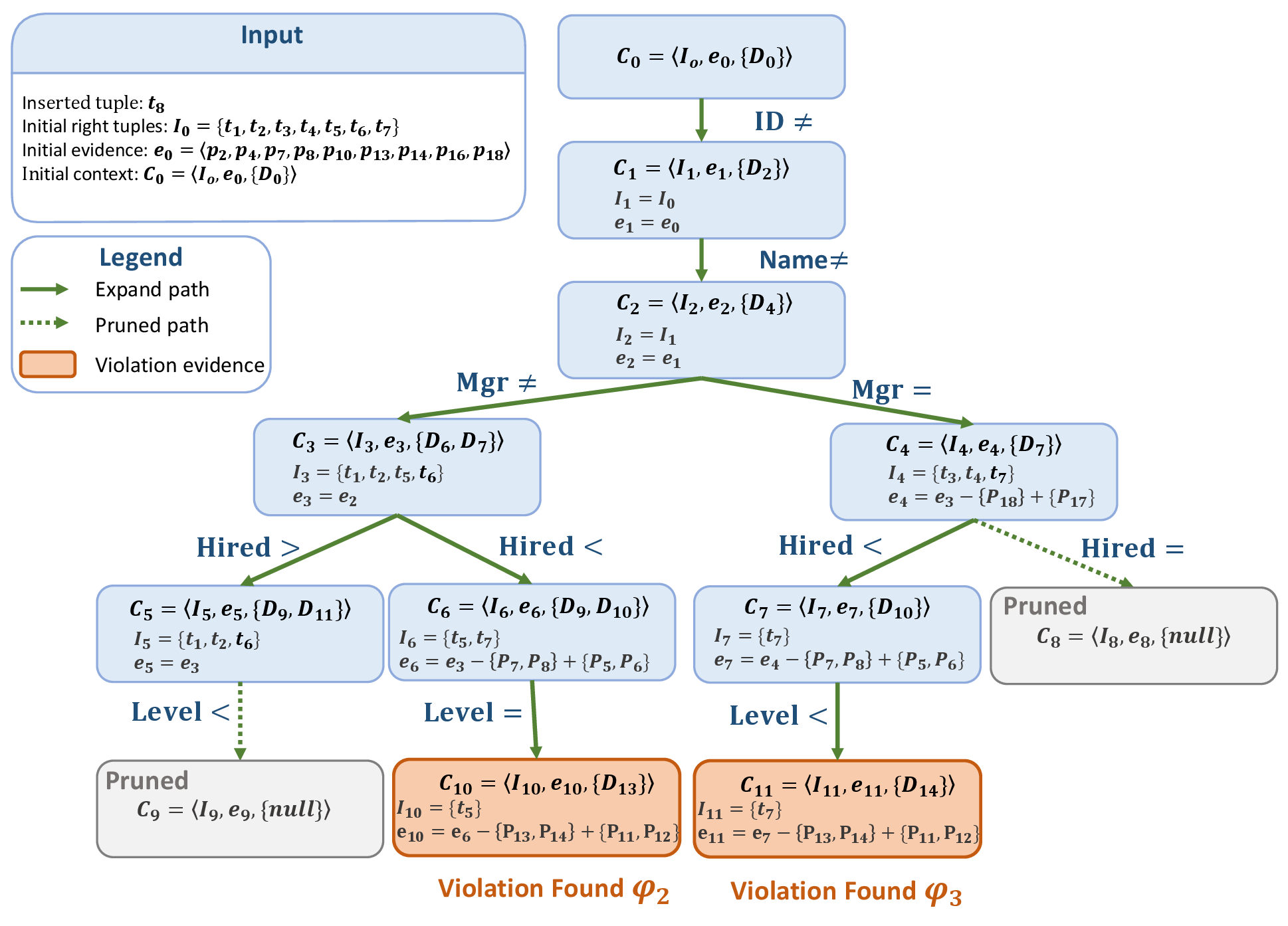}
\caption{Process of evidence refinement and pruning (taking $t_8$ as an example)}
\label{fig:pruning}
\end{figure}

Figure~\ref{fig:pruning} illustrates the evidence construction process using the insertion of tuple $t_8$ as an example. Initially, all tuple pairs are assumed to satisfy the default evidence and are associated with the root node. Since the $ID$ and $Name$ values of $t_8$ differ from those of all existing tuples, no context splitting occurs in these two layers. At the $Mgr$ layer, the context is divided into two sub-contexts. Tuples ${t_1,t_2,t_5,t_6}$ enter nodes ${D_6,D_7}$ through the inequality branch, while tuples ${t_3,t_4,t_7}$ enter node $D_7$ through the equality branch. At the $Hired$ layer, the context containing ${t_3,t_4}$ is pruned because no child node exists for the predicate $t.Hired=s.Hired$. Meanwhile, the inequality branch is further divided into ${t_1,t_2,t_6}$ and ${t_5}$. At the $Level$ layer, the context containing $t_5$ reaches leaf node $D_{13}$, identifying a violation of $\varphi_2$. Similarly, the context containing $t_7$ reaches leaf node $D_{14}$, identifying a violation of $\varphi_3$. Therefore, the violating tuple pairs $(t_8,t_5)$ and $(t_8,t_7)$ are successfully detected.

\subsection{DC Update Based on Incremental Evidence}
After incremental evidence construction, the algorithm records invalid DCs together with their corresponding violating evidence. The next step is to repair invalid DCs based on the detected violations. Similar to IncDC and 3DC, EviDC updates invalid DCs by adding more specific predicate restrictions.

The repaired DCs are generated through single-predicate extensions of invalid DCs. For example, the $(t_6, t_1)$ tuple pair violates $\varphi_1$. This can be achieved by rewriting $\varphi_1$ so that it is no longer violated by this violation pair. Adding more predicate restrictions on $\varphi_1$, such as $\varphi_1^\prime: \lnot(t.ID = s.ID \land t.Name \neq s.Name)$, indicates that there are no employees with the same name and ID. 

The DC update algorithm presented in EviDC breaks away from the double loop commonly found in DC verification. In previous methods, evidence records are first generated independently of the DCs, and each generated evidence record must then be matched against the current DCs to locate the violation. In contrast, our method records the invalid DCs and violating evidence uniformly in an object. During evidence context refinement, once the DCTrie search reaches a leaf node, it indicates that an invalid DC has been found. To quickly locate invalid DCs and their violating evidence, we extract the DC information recorded in the leaf node, create an object for each invalid DC, and store the corresponding evidence in it.

For an invalid DC, each predicate that does not appear in its violating evidence is treated as a candidate extension predicate. For a given DC $\varphi$, the algorithm first extracts the corresponding subset of violation evidence $E_\varphi$, which is the set of all predicates included in the corresponding violation evidence. Then, instead of enumerating candidate constraints aimlessly from the entire predicate space, predicates that do not appear in $E_\varphi$ are selected from the predicate space as candidate extension predicates. A candidate predicate is appended to the invalid DC to generate a repaired candidate. If the resulting DC passes the validity and non-triviality checks, it is retained.

It is worth noting that EviDC adopts a single predicate repair strategy. Although some violations can only be eliminated by adding multiple predicates simultaneously, enumerating all such combinations would result in a combinatorial explosion of candidate DCs. In the extreme case, a violation can always be eliminated by conjunctively adding all predicates from the predicate space to an invalid DC. Obviously, this is not a practical solution. Therefore, EviDC adopts a single predicate repair strategy. This design is consistent with existing incremental DC discovery methods such as IncDC and 3DC.

\subsection{Correctness and Complexity Analysis}
\subsubsection{Correctness}

The invariants maintained during evidence construction are evidence contexts. For any evidence context $c=<t,I,e,A>$, let $e$ denote the set of predicates that have already been verified as satisfied. The set $I$ always contains exactly the right-hand tuples currently satisfying evidence $e$. The set $A$ contains exactly the active nodes that may still be extended to a complete path in the DCTrie.

\begin{theorem}\label{theorem1}
Algorithm~\ref{alg:incremental_evidence_construction}
can identify all invalid DCs caused by inserted data.
\end{theorem}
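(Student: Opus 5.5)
We argue completeness by induction on the layers of DCTrie, using the invariant stated just before the theorem. Fix an arbitrary DC $\varphi\in\Sigma$ that becomes invalid after inserting $\Delta r$. By the definition of validity there is an affected tuple pair, i.e. one whose evidence contains every predicate of $\varphi$. If the new tuple is on the right, the symmetric pair $(t_{new},t')$ has evidence containing every predicate of $\varphi^{sym}$. We therefore reduce to the case of an ordered pair $(t_i,t')$ with $t_i\in\Delta r$ and $t'\in r\cup(\Delta r\setminus\{t_i\})$ such that $Evi(t_i,t')\supseteq\psi$ for $\psi\in\{\varphi,\varphi^{sym}\}$. Since Algorithm~\ref{alg:dctrie_builder} inserts both $\varphi$ and $\varphi^{sym}$, $\psi$ labels a root-to-leaf path $v_0=root,v_1,\ldots,v_m$ of $T$. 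Here $v_i$ is the node of the predicate of $\psi$ in $G_i$, or the wildcard child if $\psi$ has no predicate in $G_i$. The leaf $v_m$ stores $\varphi$ (or $\varphi^{sym}$, which we identify with $\varphi$ when reporting).

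The key claim, proved by induction on $i=0,\ldots,m$, is as follows. After processing the first $i$ groups, the context set $C$ contains a context $\langle I,e,A\rangle$ with $t'\in I$, $v_i\in A$, and $e$ agreeing with $Evi(t_i,t')$ on $G_1,\ldots,G_i$.

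\emph{Base case.} For $i=0$ the initial context has $I=I_r\ni t'$ and $A=\{root\}$.

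\emph{Inductive step, categorical groups.} Suppose $G_{i+1}$ is categorical and handled by Algorithm~\ref{alg:update_categorical_context}. Exactly one of $t'\in I_{eq}$ or $t'\notin I_{eq}$ holds, and this decides whether $(t_i,t')$ satisfies $p_{eq}$ or $p_{neq}$. In either case $t'$ is routed to $I_{match}$ or $I_{rem}$ accordingly, and that set is nonempty. The mask update makes $e$ correct on $G_{i+1}$. We must then check that $T.\textsc{Advance}(A,p)$ contains $v_{i+1}$. This holds because $v_{i+1}$ is either the child labeled by the satisfied predicate (it lies in $\psi\subseteq Evi$, so it matches $p$) or a wildcard child, and Advance keeps both kinds by the description of refinement. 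Hence the advanced set is nonempty and the context is not pruned.

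\emph{Inductive step, numerical groups.} Suppose $G_{i+1}$ is numerical and handled by Algorithm~\ref{alg:update_numerical_context}. The three cases $=,<,>$ partition the right tuples via $I_{eq}$, $I_{lt}$, and the remainder. Here we use that the node labels are exactly the three direction classes $\{=,\ge,\le\}$, $\{<,\le,\ne\}$ and $\{>,\ge,\ne\}$. So if $\psi$ contains some predicate $q$ of $G_{i+1}$ (e.g.\ $\le$), the node for $q$ is reached from the branch of the direction that $(t_i,t')$ actually takes, because that direction's class contains $q$ whenever $q(t_i,t')$ holds.

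After the last layer, $v_m\in A$ is a leaf, so $\textsc{CollectEvidence}$ puts $\varphi$ into $\Sigma_i^-\subseteq\Sigma^-$, which proves the theorem. We would also remark on the converse (soundness): every reported DC is indeed violated, since the final $e$ equals the true evidence restricted to the path predicates.

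The main obstacle is the numerical step. We must state precisely what Advance does for non-strict predicates and for DCs holding several predicates of the same group. In the latter case Algorithm~\ref{alg:dctrie_builder} inserts a branch for each $p\in S$, so the path is not unique. We must argue that the node reached is consistent with all of $\psi$'s predicates in that group, i.e.\ all of them lie in the one direction class satisfied by the pair. We would handle this by formalizing Advance$(A,p)$ as the union of the children of $A$ whose label class contains $p$, together with the wildcard children. We would then check the claim against the class partition case by case, for the six operators against the three directions.
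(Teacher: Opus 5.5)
Your proposal is correct and is essentially the paper's argument: follow the context containing the violating partner layer by layer, and observe that its active set can never become empty because the violated DC's root-to-leaf path was inserted into DCTrie. Your layer-wise induction makes explicit two points the paper leaves implicit, namely the reduction to $\varphi^{sym}$ when the inserted tuple is on the right, and the requirement that \textsc{Advance} keep every child whose predicate is implied by the pair's actual direction; the latter also settles DCs with several predicates in one group, since every branch created by Algorithm~\ref{alg:dctrie_builder} ends at a leaf for $\psi$ (your closing soundness aside is not needed, and with that branching insertion a reached leaf certifies only one predicate per group, so it would additionally need \textsc{CollectEvidence} to check $\psi\subseteq e$).
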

\begin{IEEEproof}
Assume that $\varphi \in \Sigma$ is an invalid DC caused by inserted data. There must exist a tuple pair $(t_i,t_j)$ that satisfies all predicates along the path corresponding to $\varphi$, thereby violating $\varphi$ after the data update.

Initially, for each inserted tuple $t_i$, Algorithm~\ref{alg:incremental_evidence_construction} constructs an initial evidence context $c_0=\langle t_i,I_0,e_0,A_0\rangle$, where $I_0$ contains all tuples except $t_i$ and $A_0=\{\text{root}\}$. Therefore, tuple $t_j$ must belong to the candidate right-hand tuple set at the beginning.

At any level of the loop, the active node set of the evidence context containing $t_j$ is not empty and will not be pruned. Assume that at the $i$-th level, the active node set of the context containing $t_j$ becomes empty. The currently processed attribute predicate is $p_i$. It indicates that no path exists in the constructed DCTrie along the direction of $p_i$. However, this contradicts the fact that $\varphi$ has already been inserted into the DCTrie. Therefore, the evidence context containing $t_j$ must be preserved.

As a result, the context containing $t_j$ continues to survive throughout the subsequent refinement stages and advances along the DCTrie path corresponding to $\varphi$.

After all attribute groups have been processed, the refinement reaches the leaf node associated with $\varphi$. At this point, Algorithm~\ref{alg:incremental_evidence_construction} can correctly identify $\varphi$ as an invalid DC and does not miss any violating evidence.
\end{IEEEproof}

\begin{theorem}\label{theorem2}
The DC repair procedure of EviDC is sound.
\end{theorem}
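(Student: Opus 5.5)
To prove Theorem~\ref{theorem2}, I will first fix what ``sound'' means here: every DC in the output set $\Sigma'=(\Sigma\setminus\Sigma^{-})\cup\Sigma_r$ returned by Algorithm~\ref{alg:evidc} holds on the updated instance $r\cup\Delta r$ and is non-trivial. This is a soundness claim, not a completeness or minimality claim. Under the single-predicate repair strategy we do not claim that every minimal valid DC is recovered. The proof then splits into two cases according to the two parts of $\Sigma'$.

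\emph{Case 1: $\varphi\in\Sigma\setminus\Sigma^{-}$.} Since $\varphi\in\Sigma$, it held on $r$. By the validity definition, no evidence $Evi(t,s)$ with $t,s\in r$ contains all predicates of $\varphi$. Any new violation must therefore come from an affected pair $(t,t')$ with at least one tuple in $\Delta r$.

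\begin{itemize}
\item Pairs with $t\in\Delta r$ as the left tuple are exactly those initialized in Algorithm~\ref{alg:incremental_evidence_construction}.
\item Pairs in the reverse direction $(t_{old},t_{new})$ are covered because DCTrie contains $\varphi^{sym}$. A reverse pair violates $\varphi$ if and only if the forward pair violates $\varphi^{sym}$.
\end{itemize}

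Theorem~\ref{theorem1} states that Algorithm~\ref{alg:incremental_evidence_construction} places every such violated DC into $\Sigma^{-}$. Hence any $\varphi$ not in $\Sigma^{-}$ has no violating pair in $r\cup\Delta r$ and remains valid. I would state this by contraposition, mapping a violation of $\varphi^{sym}$ back to $\varphi$ via the leaf association.

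\emph{Case 2: $\varphi'=\varphi\cup\{p\}\in\Sigma_r$.} By construction, $\varphi'$ is added to $\Sigma_r$ only after it passes the explicit ``valid and non-trivial'' test in Algorithm~\ref{alg:evidc}, so soundness follows once that test is shown to be faithful. I would argue this in two steps.

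\begin{itemize}
\item Pairs inside $r$ need no check. Since $\varphi\models r$, augmentation gives $\varphi'\models r$.
\item For affected pairs, the choice $p\in P\setminus E_\varphi$ guarantees that no recorded violating evidence of $\varphi$ contains $p$. Every recorded violation of $\varphi$ is therefore eliminated.
\item Any other affected evidence that contains all of $\varphi'$ also contains all of $\varphi$, so it would itself be a violation of $\varphi$. By Theorem~\ref{theorem1} it would then have been recorded in $E_\varphi$, which cannot contain $p$. So no such evidence exists, and $\varphi'$ is valid on $r\cup\Delta r$.
\end{itemize}

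\textbf{Main obstacle.} Applying Theorem~\ref{theorem1} in Case 2 needs more than its literal statement. It needs that \emph{all} violating evidence of $\varphi$ is collected into $E_\varphi$, not merely that $\varphi$ is flagged. The last sentence of the proof of Theorem~\ref{theorem1} asserts that no violating evidence is missed, and I would make this explicit. The context-tracking argument there applies to every violating $t_j$, not just one witness, so each such context reaches the leaf for $\varphi$ (or $\varphi^{sym}$) and its evidence is stored in the object for $\varphi$.

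A further subtlety is that $E_\varphi$ is the union of predicates over \emph{all} violating evidence. That makes the candidate set $P\setminus E_\varphi$ conservative but still sound. If this full-collection argument proved delicate, I would fall back on the explicit validity check in Algorithm~\ref{alg:evidc}, assumed to be evaluated over $r\cup\Delta r$, which gives soundness of $\Sigma_r$ directly.
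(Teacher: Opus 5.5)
Your Case 2 is the paper's argument. It too takes an evidence $e\supseteq\varphi'$ and derives the contradiction between $p\in\varphi'\subseteq e$ and $p\in P_{add}=P\setminus\bigcup_{e'\in E_\varphi}e'$. The paper, however, concludes $p\notin e$ for an arbitrary evidence of the updated instance without saying why $e$ must belong to $E_\varphi$. Your bullets supply exactly that step: augmentation from $\varphi\models r$ for pairs inside $r$, and full collection of violating evidence for affected pairs. Your version is therefore the more complete one.

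Your fallback on the explicit validity test in Algorithm~\ref{alg:evidc} is a route the paper never takes. If that test is exact over $r\cup\Delta r$, soundness of $\Sigma_r$ is immediate, and the choice of $P_{add}$ becomes merely a candidate filter.

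Your Case 1 goes beyond what the paper proves here. The paper's soundness claim covers only retained repaired DCs and leaves $\Sigma\setminus\Sigma^{-}$ to Theorem~\ref{theorem1}.

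Both Case 1 and your full-collection step rest on the leaf-mapping assumption you mention, and it should be stated precisely. Algorithm~\ref{alg:dctrie_builder} associates the symmetric leaf with $\varphi^{sym}$. A forward pair $(t_{new},t_{old})$ reaching that leaf must therefore be reported under $\varphi$ with the \emph{symmetrized} evidence $Evi(t_{old},t_{new})$, not with its own forward evidence. Otherwise $P\setminus\bigcup E_\varphi$ could contain $t.C<s.C$ for an attribute $C$ with $t_{old}.C<t_{new}.C$, and $\varphi\cup\{t.C<s.C\}$ would still be violated by $(t_{old},t_{new})$. The paper's remark that evidence is symmetrized after construction is what licenses your step.
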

\begin{IEEEproof}
We prove by contradiction. Suppose $\varphi$ is an invalid DC identified by EviDC. Let $E_{\varphi}$ denote the set of violating evidence corresponding to $\varphi$. A repaired DC derived from $\varphi$ has the form $\varphi'=\varphi\cup\{p\}$, where $p \in P_{add}$ and $P_{add}=P\setminus \bigcup_{e\in E_{\varphi}} e$.

Suppose that EviDC retains a repaired DC $\varphi'$ but $\varphi'$ is not valid on the updated dataset. Then there exists an evidence $e$ containing all predicates in $\varphi'$. Since $\varphi'$ is generated by adding a predicate $p\in P_{add}$ to $\varphi$, and $P_{add}$ only contains predicates absent from the violating evidence, we have $p\notin e$. However, because $p\in\varphi'$ and all predicates of $\varphi'$ are contained in $e$, it follows that $p\in e$, which is a contradiction. Therefore, every repaired DC retained by EviDC is valid on the updated dataset.

Hence the theorem holds.
\end{IEEEproof}

\begin{theorem}\label{theorem3}
EviDC terminates after a finite number of steps.
\end{theorem}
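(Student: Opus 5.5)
The argument decomposes EviDC (Algorithm~\ref{alg:evidc}) into its phases and shows that each performs finitely many elementary operations on finite inputs. All inputs are finite: the instance $r$, the insertion $\Delta r$, the schema $R$ (so the predicate space $P$ has at most $6|R|$ predicates) and the DC set $\Sigma$ (each $\varphi\in\Sigma$ is a subset of $P$). Preprocessing consists of building $P$, partitioning it into the groups $G=\langle G_1,\ldots,G_m\rangle$ with $m\le|R|$, building the indexes and sorting. Each of these is a single pass or a sort over finite sets, so it terminates.

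\textbf{DCTrie construction.} For Algorithm~\ref{alg:dctrie_builder} I will use induction on the measure $m+1-i$. Each call $\textsc{InsertDC}(node,\varphi,G,i)$ either stops when $i>|G|$ or makes recursive calls with $i+1$. The number of such calls is at most $|S|\le|G_i|$, or exactly one when the wildcard branch is taken. The recursion tree for one DC therefore has depth at most $m$ and branching at most $\max_i|G_i|\le 6$. Hence each insertion makes at most $6^m$ calls, and at most $2|\Sigma|$ insertions are performed, counting the symmetric forms. It follows that $T$ is finite, with at most $2|\Sigma|\cdot 6^m$ nodes.

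\textbf{Evidence construction.} In Algorithm~\ref{alg:incremental_evidence_construction}, the outer loop runs over the finite set $\Delta r$, and for each $t_i$ there are exactly $m$ group updates. The key invariant concerns the contexts produced by $\textsc{UpdateCategoricalContext}$ and $\textsc{UpdateNumericalContext}$. The right-tuple sets $I$ of the output contexts that descend from a given input context are pairwise disjoint nonempty subsets of that context's $I$, because they come from intersections with and differences against $I_{eq}$ and $I_{lt}$. Consequently the total number of contexts after any layer is at most $|r\cup\Delta r|$. Each update iterates once over the finite current context set and performs finitely many set operations together with \textsc{Advance} calls. Each \textsc{Advance} call inspects the children of finitely many nodes of the finite trie $T$. $\textsc{CollectEvidence}$ then scans the finitely many surviving contexts. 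So this phase terminates, and it outputs a finite $E_\Delta$ and a finite $\Sigma^-\subseteq\Sigma$.

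\textbf{Repair.} The update loop iterates over the finite $\Sigma^-$ and over $P_{add}\subseteq P$. For each candidate $\varphi'$ it runs a validity check and a triviality check. Validity is checked against the finite evidence set of $r\cup\Delta r$, which has at most $|r\cup\Delta r|^2$ tuple pairs, and triviality is a finite check on predicate pairs. The repair does not iterate: repaired DCs are not re-fed into the loop, so no unbounded chain of extensions arises.

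\textbf{Expected obstacle.} The main point requiring care is the bound on the number of contexts, that is, showing that context splitting cannot create unboundedly many contexts. I will handle it with the disjointness invariant above, checked line by line on Algorithms~\ref{alg:update_categorical_context} and~\ref{alg:update_numerical_context}. In each case the output right-tuple sets partition a subset of the input set, and empty sets are discarded. Summing the bounds over all phases gives a finite total, which proves Theorem~\ref{theorem3}.
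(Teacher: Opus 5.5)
Your proposal is correct and uses the same phase-by-phase finiteness argument as the paper: finitely many inserted tuples, finitely many predicate groups and hence trie layers, and finitely many invalid DCs to repair. It also makes explicit two points the paper leaves implicit, namely that the number of contexts per layer stays bounded (even the crude bound $|C'|\le 3|C|$ would do) and that repaired DCs are not fed back into the repair loop. One cosmetic slip: a recursion tree of depth $m$ with branching at most $6$ has at most $6^m$ leaves but up to $(6^{m+1}-1)/5$ calls, which does not affect finiteness.
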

\begin{IEEEproof}
The number of inserted tuples is finite. Since the predicate space and the set of predicate groups are finite, the depth of DCTrie is also finite. For each inserted tuple, evidence refinement is performed over a finite number of predicate groups. Hence, the evidence construction stage performs a finite number of refinement operations.

For DC update, the number of invalid DCs identified during evidence construction is finite. Therefore, the repair stage also performs a finite number of operations.

Since both incremental evidence construction and DC update are finite processes, EviDC terminates after a finite number of steps.

\end{IEEEproof}

\subsubsection{Complexity}
Assume that the size of the original dataset is $n$, the number of inserted tuples is $|\Delta r|=m$, the number of predicates is $|P|$, and the number of predicate groups is $g$.

In the DCTrie construction phase, the algorithm traverses each DC in the original DC set and inserts it into the tree according to the predicate group order. For each DC, the insertion process accesses at most $g$ layers. Therefore, the time complexity of DCTrie construction is $O(|\Sigma|g)$. In terms of space, the number of nodes in DCTrie depends on the degree of prefix sharing among DCs. In the worst case, if there is no prefix sharing, each DC occupies an independent path of length at most $g$. Thus, the space complexity of DCTrie is $O(|T|)$, where $|T|\leq O(|\Sigma| \cdot g)$.

In the incremental evidence construction phase, each inserted tuple $t_i$ is compared with all tuples in $r\cup(\Delta r\setminus\{t_i\})$. Therefore, the total number of tuple pairs considered is $m(n+m-1)$, which is bounded by $O(m(n+m))$. For each tuple pair, evidence refinement proceeds through at most $g$ predicate groups, corresponding to the depth of DCTrie. Therefore, the worst-case time complexity of incremental evidence construction is $O(m(n+m)g)$.

In practice, EviDC does not retain all candidate tuple pairs. The DCTrie-guided pruning continuously reduces the number of candidate tuples and active paths during refinement. Let $n_i$ denote the average number of candidate tuples retained at level $i$, and let $a_i$ denote the average number of active nodes at that level. Then the practical complexity can be expressed as
\[
O\left(m\sum_{i=1}^{g} n_i a_i\right),
\]

where usually $n_i \ll n+m$ due to progressive context refinement and pruning.

In the DC update phase, EviDC repairs invalid DCs using single-predicate extensions. For each invalid DC $\varphi\in\Sigma^{-}$, let $U_{\varphi}=\bigcup_{e\in E_{\varphi}}e$ denote the set of predicates appearing in its violating evidence. The number of candidate predicates is $|P|-|U_{\varphi}|$. For each generated candidate DC, the algorithm performs validity and non-triviality checking. Let $C_{check}$ denote the cost of checking a candidate DC. Therefore, the update cost is
\[O\left(\sum_{\varphi\in\Sigma^{-}}(|P|-|U_{\varphi}|)\cdot C_{check}\right).\]

In the worst case, this can be bounded by
\[
O(|\Sigma^{-}|\cdot |P|\cdot C_{check}).
\]

Overall, the worst-case time complexity of EviDC is
\[O\!\left(|\Sigma| \cdot g+m(n+m)g+|\Sigma^{-}|\cdot |P|\cdot C_{check}\right).\]

\section{Experimental Evaluation}\label{section:experiment}
In this section, we conduct experiments to evaluate the effectiveness and scalability of EviDC. The results show that DCTrie reduces the computational overhead during DC updating in most tested settings.

\subsection{Settings}

All experiments were conducted on the same hardware platform equipped with an Intel Core i7-12700 processor running at 2.10 GHz, 12 physical cores, 32 GB RAM, and Windows 11. All algorithms were implemented or executed in Java using JDK 21. For each experiment, we repeated the execution 10 times and reported the average running time to reduce the influence of runtime fluctuations.

We evaluated the algorithms on both real-world and synthetic datasets that have been widely used in previous DC discovery studies~\cite{bleifuss2017efficient,livshits2020approximate,pena2019discovery,pena2022fast,qian2023incremental,pena2024discovering}. The synthetic Tax dataset is derived from~\cite{bohannon2006conditional} and contains representative attributes of tax records.

For each dataset, we first discovered initial DCs from the original relation. Then, we constructed incremental datasets with controlled violations. Specifically, violation tuples were generated according to the initial DC set and mixed with regular tuples to simulate inconsistent updates in dynamic data scenarios.

We implemented a naive exhaustive verifier to validate the correctness of EviDC. The verifier enumerates all evidence and checks every DC against all generated evidence to determine whether it is violated. Precision and Recall are computed based on the exact matching between the invalid DC sets returned by EviDC and the exhaustive verifier, rather than on the number of detected DCs. As shown in Table~\ref{tab:invalid_dc_correctness}, EviDC achieves perfect Precision and Recall on all tested datasets, confirming the consistency of its invalid DC detection results with those of the exhaustive verifier.

\begin{table}[t]
\centering
\caption{Correctness of Invalid DC Detection.}
\label{tab:invalid_dc_correctness}
\begin{tabular}{lcccc}
\toprule
Dataset & EviDC & Brute-force Verifier &Precision &Recall\\
\midrule
FD15     & 414  & 414 &1.0&1.0 \\
Hospital & 50   & 50   &1.0&1.0 \\
Claim    & 7    & 7    &1.0&1.0 \\
Atom     & 1817 & 1817 &1.0&1.0 \\
\bottomrule
\end{tabular}
\end{table}

We compare EviDC with two representative incremental DC discovery methods, namely 3DC and IncDC. Since the source code of 3DC is not publicly available, we implemented 3DC based on the algorithm description provided in~\cite{pena2024discovering} and the ECP framework~\cite{pena2022fast} released by the authors. We denote this implementation as 3DC* in the experimental results. To validate the reimplementation, we compared the behavior of 3DC* with the results reported in the original paper. The overall trends with respect to insertion ratios, evidence size, and runtime are consistent with those reported by the authors. IncDC was evaluated using the implementation publicly released by its authors.

For all methods, runtime measures the cost of a single incremental maintenance operation, including incremental evidence construction, invalid DC detection and update, and the maintenance of auxiliary structures required during the update. In the preprocessing stage, 3DC* constructs the original evidence set $E_r$, IncDC builds indexes over the initial dataset, and EviDC constructs the DCTrie. These preprocessing steps incur one-time costs and are therefore excluded from runtime measurements. Their construction times are reported separately in Table~\ref{tab:runtime_preprocessing}.

\subsection{Experimental Results}
\subsubsection{\bf{Exp-1: EviDC against 3DC, IncDC}}
To evaluate the performance of the proposed method under different update workloads, we compare the running times of the three algorithms on each dataset with insertion ratios $\lambda$ of 0.1\%, 1\%, 10\%, and 30\%. The results are reported in Table~\ref{tab:runtime_comparison}, where $OOM$ indicates that the algorithm failed to complete within the time limit.

\begin{table*}[!t]
\centering
\caption{Runtime comparison under different insertion ratios.}
\label{tab:runtime_comparison}
\begin{tabular}{lrrBccBccBccBcc}
\toprule
\multirow{2}{*}{Dataset} 
& \multirow{2}{*}{$|r|$}
& \multirow{2}{*}{$|R|$}
& \multicolumn{3}{c}{0.1\%} 
& \multicolumn{3}{c}{1\%} 
& \multicolumn{3}{c}{10\%} 
& \multicolumn{3}{c}{30\%} \\
\cmidrule(lr){4-6} \cmidrule(lr){7-9} \cmidrule(lr){10-12} \cmidrule(lr){13-15}
& & 
& EviDC & 3DC* & IncDC 
& EviDC & 3DC* & IncDC 
& EviDC & 3DC* & IncDC 
& EviDC & 3DC* & IncDC \\
\midrule
Tax      & 99904  & 15 & 0.04 & 0.08 & 6.07 & 0.13 & 0.21 & 56.83 & 0.61 & 1.34 & OOM    & 2.11 & 4.67 & OOM \\
Airport  & 55100  & 11 & 0.11 & 0.18 & 21.56 & 0.73 & 1.85 & 128.75 & 5.09 & 7.13 & OOM    & 12.91 & 18.12 & OOM \\
Hospital & 114920 & 15 & 0.02 & 0.04 & 0.09 & 0.05 & 0.05 & 0.34 & 0.15 & 0.26 & 10.69 & 0.32 & 0.78 & OOM \\
Ncvoter  & 675000 & 15 & 0.21 & 0.66 & 1086.24 & 1.46 & 4.34 & OOM & 3.65 & 40.30 & OOM & 48.87 & 145.01 & OOM \\
Atom     & 147067 & 13 & 0.05 & 0.06 & 14.43 & 0.13 & 0.11 & 235.47 & 0.77 & 1.75 & OOM & 2.21 & 5.14 & OOM \\
Flights  & 499300 & 17 & 0.70 & 0.66 & OOM & 5.85 & 5.74 & OOM & 49.31 & 53.09 & OOM & 167.84 & 184.21 & OOM \\
Adult    & 32561  & 15 & 0.12 & 0.24 & 33.76 & 0.77 & 1.06 & 84.45 & 7.63 & 8.91 & 352.17 & 18.80 & 30.37 & 1,043.94 \\
Claim    & 154061 & 11 & 0.02 & 0.19 & 0.14 & 0.04 & 0.74 & 0.61 & 0.09 & 1.93 & 4.37 & 0.23 & 4.55 & 18.97 \\
UCE      & 14246  & 11 & 0.91 & 2.39 & 1,427.05 & 5.74 & 8.01 & OOM & 42.34 & 66.63 & OOM & 97.91 & 269.27 & OOM \\
Dit      & 780000 & 11 & 0.45 & 1.89 & OOM & 2.86 & 4.50 & OOM & 28.47 & 38.52 & OOM & 91.04 & 132.99 & OOM \\
FD15     & 187500 & 15 & 0.08 & 0.12 & 0.98 & 0.15 & 0.17 & 1.06 & 0.26 & 0.39 & 1.39 & 0.45 & 0.73 & 1.76 \\
\bottomrule
\end{tabular}
\end{table*}

As shown in Table~\ref{tab:runtime_comparison}, the running times of all three methods increase with the insertion ratio. EviDC grows more slowly on most datasets. This is because its evidence construction is bounded by reachable violation paths rather than by the full predicate space. During refinement, its early termination reduces predicate comparison cost and context splitting cost.

Compared with 3DC*, EviDC reduces runtime by approximately 50\% in most scenarios. For example, at an insertion ratio of 30\%, EviDC takes 97 seconds on UCE, while 3DC* requires 269 seconds. The improvement comes from avoiding the construction and matching of evidence that cannot reach any DCTrie leaf. EviDC prunes many tuple pairs during construction and records an invalid DC when a DCTrie leaf is reached, reducing both intermediate evidence size and evidence-to-DC matching cost.

The gap becomes more evident at high insertion ratios. As more tuples are inserted, the number of affected tuple pairs increases rapidly, forcing 3DC* to construct and compare a larger amount of incremental evidence. In contrast, EviDC only keeps evidence contexts that can still reach some DCTrie path, and many irrelevant tuple pairs are pruned before complete evidence records are generated. Therefore, the saved predicate evaluation and evidence matching costs become larger as the insertion ratio increases.

At low insertion ratios, EviDC and 3DC* perform similarly on some datasets. On the Flights dataset, EviDC is slightly slower than 3DC* at low insertion ratios. This is because the initial DC set is relatively large, containing 57,340 DCs. A larger DC set leads to a larger DCTrie and increases the number of active nodes maintained during context refinement. When only a few tuples are inserted, the pruning benefit is not yet large enough to offset the trie traversal overhead. As the insertion ratio increases, more irrelevant tuple-pair contexts are pruned early, and the saved predicate evaluation and evidence storage costs become larger than the additional DCTrie traversal cost.

\begin{table}[!t]
\centering
\caption{Execution Time Comparison on Preprocessing Steps.}
\label{tab:runtime_preprocessing}
\begin{tabular}{lccc}
\toprule
Dataset & EviDC (DCTrie) & 3DC* ($E_r$) & IncDC (Index) \\
\midrule
Tax      & 0.07  & 12.95  & 18.37 \\
Airport  & 0.16  & 12.08  & 211.88 \\
Hospital & 0.02  & 2.68   & 9.89 \\
Ncvoter  & 0.28  & 371.16 & 166.01 \\
Atom     & 0.08  & 6.45   & 52.92 \\
Flights  & 0.18  & 499.42 & 259.64 \\
Adult    & 0.35  & 95.51  & 33.91 \\
Claim    & 0.04  & 16.95  & 11.89 \\
UCE      & 7.36  & 96.83  & 113.54 \\
Dit      & 1.37  & 353.44 & -- \\
FD15     & 0.07  & 2.40   & 44.75 \\
\bottomrule
\end{tabular}
\end{table}

Table~\ref{tab:runtime_preprocessing} compares the preprocessing cost of the three methods. $E_r$ denotes the original evidence set. Across most datasets, DCTrie construction finishes within one second. Even on larger datasets such as Flights and Ncvoter, EviDC only takes 0.18s and 0.28s, while 3DC* and IncDC require hundreds of seconds.

The reason is that DCTrie construction depends mainly on the number of existing DCs and the number of predicate groups. It inserts each DC as a path and merges common prefixes. In contrast, 3DC* constructs the original evidence set over tuple pairs, and IncDC builds indexes over the original data. Therefore, their preprocessing costs grow more directly with data size. This explains why DCTrie introduces much lower preprocessing time and memory overhead.

\subsubsection{\bf{Exp-2: Evidence Size under Different Insertion Ratios}}
In order to analyze the impact of DCTrie on evidence size, we counted the number of evidence under different insertion ratios. 

\begin{figure*}[t]
\centering
\subfloat[]{\includegraphics[width=1.6in]{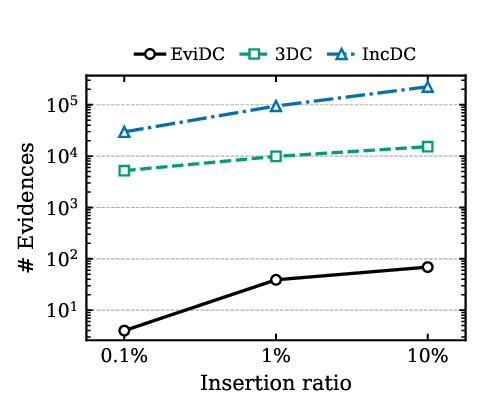}}
\subfloat[]{\includegraphics[width=1.6in]{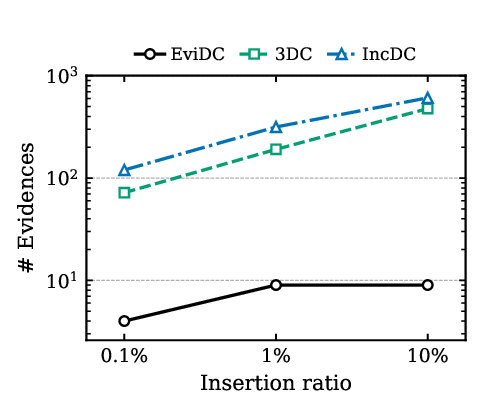}}
\subfloat[]{\includegraphics[width=1.6in]{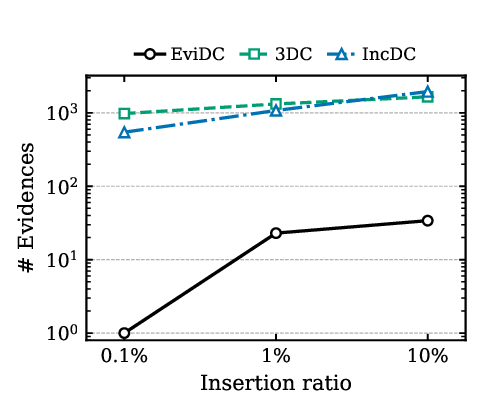}}
\subfloat[]{\includegraphics[width=1.6in]{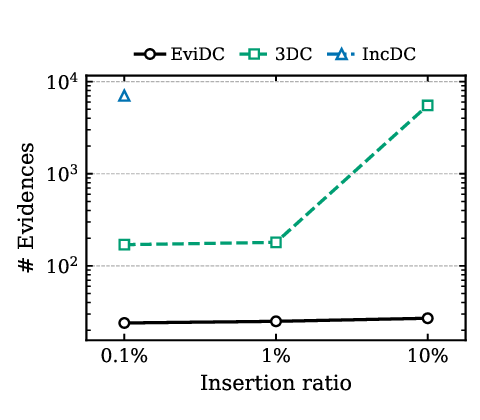}}

\vspace{-1mm}

\subfloat[]{\includegraphics[width=1.6in]{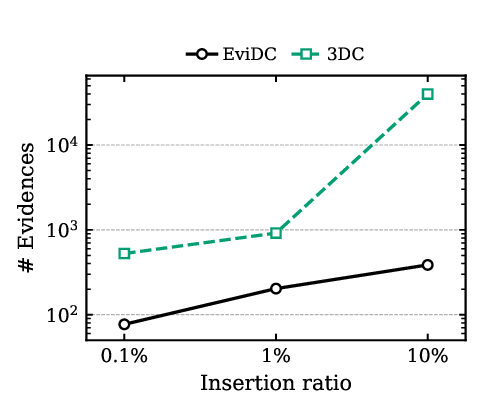}}
\subfloat[]{\includegraphics[width=1.6in]{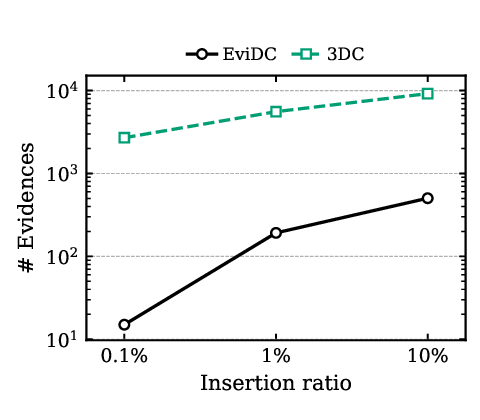}}
\subfloat[]{\includegraphics[width=1.6in]{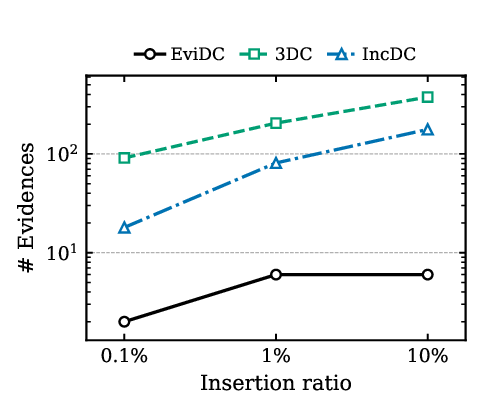}}
\subfloat[]{\includegraphics[width=1.6in]{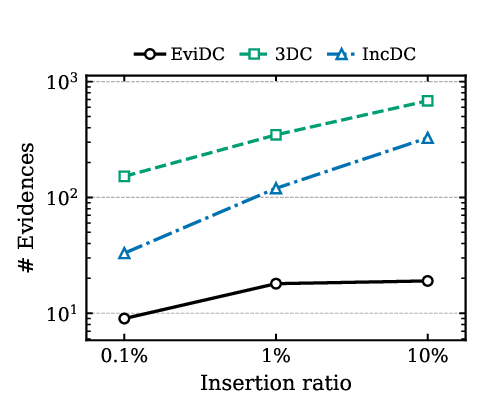}}

\vspace{-2mm}
\caption{Comparison of evidence size under different insertion ratios. (a) Airport, (b) Hospital, (c) Claim, (d) Tax, (e) Dit, (f) Flights, (g) FD15, (h) FD25.}
\label{fig:evidence_count}
\end{figure*}

In Figure~\ref{fig:evidence_count}, EviDC generates much less evidence than 3DC* and IncDC across most datasets and insertion ratios. This is because DCTrie performs reachability checking during evidence refinement. If partial evidence cannot be extended to any existing DC, EviDC stops refining the corresponding context and avoids materializing a complete evidence record. Therefore, the retained evidence mainly corresponds to potential violation patterns, while 3DC* and IncDC may still generate evidence for tuple pairs irrelevant to all existing DCs.

\subsubsection{\bf{Exp-3: Evidence Size under Different Data Sizes}}
We fixed the insertion ratio at 1\% and compared the changes in the amount of evidence generated by EviDC, 3DC, and IncDC under different original data sizes.
\begin{figure*}
\centering
\subfloat[]{\includegraphics[width=1.6in]{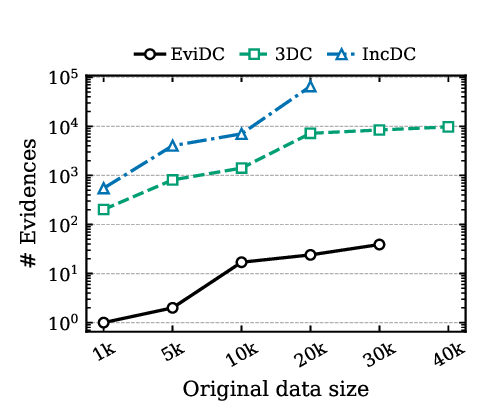}}
\subfloat[]{\includegraphics[width=1.6in]{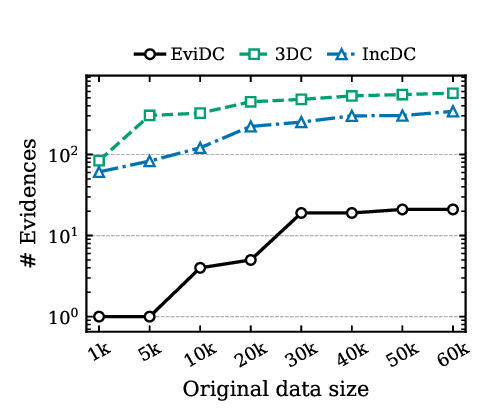}}
\subfloat[]{\includegraphics[width=1.6in]{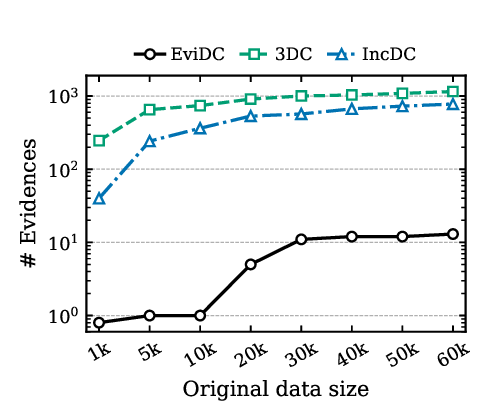}}
\subfloat[]{\includegraphics[width=1.6in]{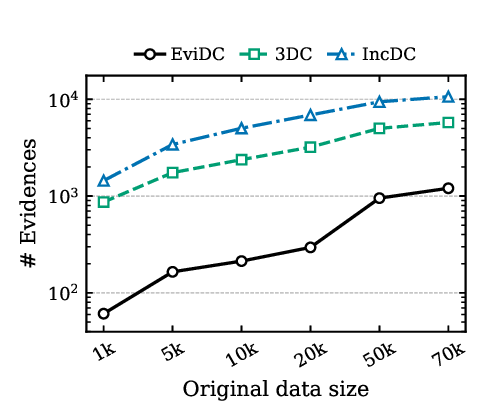}}

\vspace{-1mm}

\subfloat[]{\includegraphics[width=1.6in]{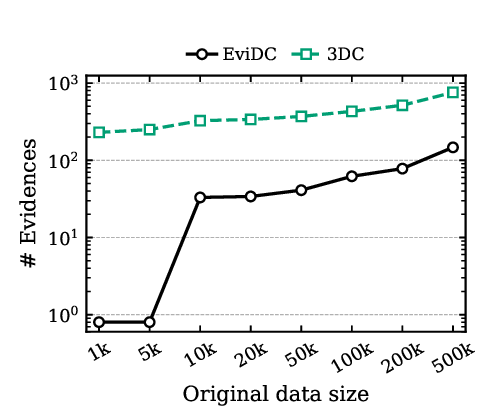}}
\subfloat[]{\includegraphics[width=1.6in]{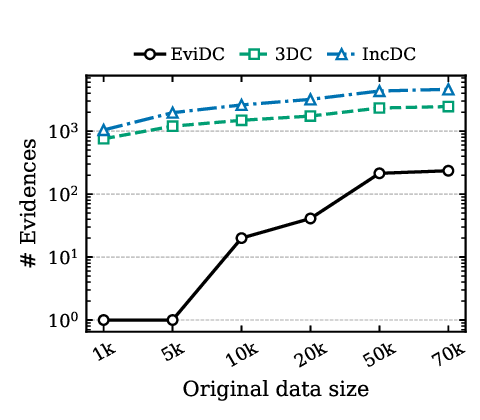}}
\subfloat[]{\includegraphics[width=1.6in]{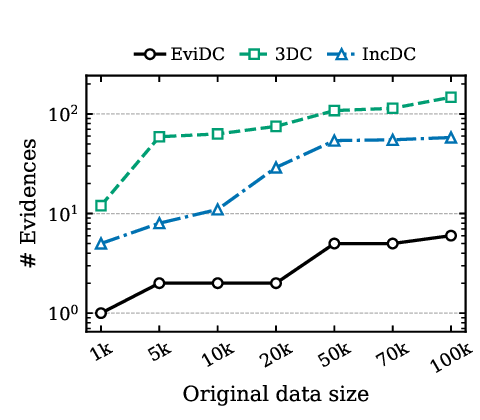}}
\subfloat[]{\includegraphics[width=1.6in]{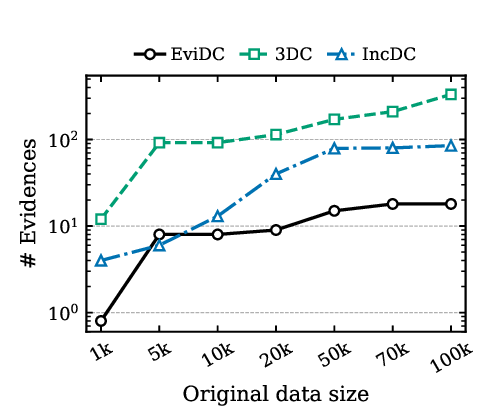}}
\caption{Comparison of evidence size under varying original data sizes. (a) Airport, (b) Hospital, (c) Claim, (d) Tax, (e) Dit, (f) Flights, (g) FD15, (h) FD25.}
\label{fig:EviDataSize}
\end{figure*}

In Figure~\ref{fig:EviDataSize}, all methods generate more evidence as the original data size increases, but EviDC grows more slowly. A clear gap between EviDC and the comparison methods can be observed on Flights, Hospital, and Claim. A larger dataset introduces more affected tuple pairs, but many of them cannot match any complete violation path in DCTrie. These tuple pairs are pruned after only a few predicate groups, reducing predicate comparison cost and intermediate evidence storage. This explains why EviDC suppresses evidence expansion more effectively as the data scale increases.

\subsubsection{\bf{Exp-4: Impact of Predicate Space Size}}
To evaluate the scalability of EviDC with respect to predicate space size and DC complexity, we use the FD dataset to generate four datasets, namely FD5, FD10, FD15, and FD20. These datasets contain 5, 10, 15, and 20 attributes. As the number of attributes increases, both the predicate space and the number of discovered DCs grow significantly. The results are shown in Figure~\ref{fig:fd_runtime}.

\begin{figure}
\centering
\subfloat[]{\includegraphics[width=1.6in]{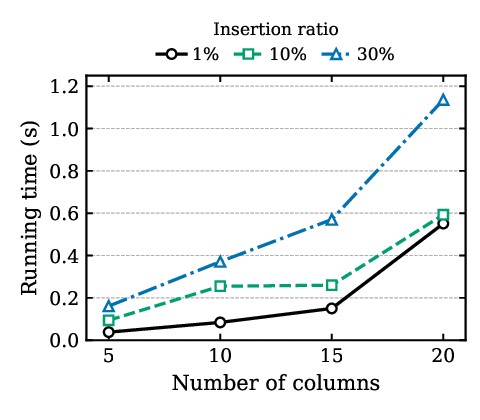}}%
\hfill
\subfloat[]{\includegraphics[width=1.6in]{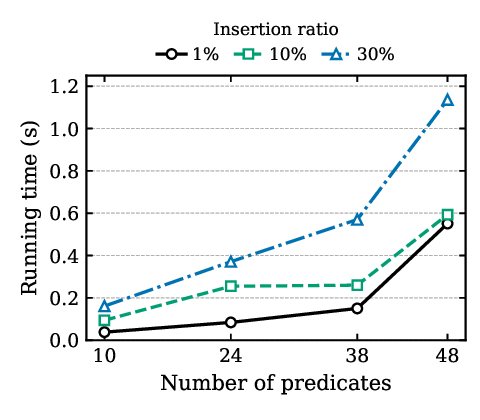}}%
\caption{Runtime comparison under different predicate and attribute counts. (a) runtime vs. attribute count, (b) runtime vs. predicate count.}
\label{fig:fd_runtime}
\end{figure}

Overall, the runtime of EviDC increases as the number of attributes, predicates, and original DCs grows. When the number of attributes increases from 5 to 20, the number of original DCs grows from 10 to 1120 and the number of predicates grows from 10 to 48. This trend is expected because a larger predicate space introduces more possible refinement branches, and a larger DC set leads to more potential violation paths in DCTrie. However, the increase in runtime is relatively moderate. 

This result shows that EviDC is not sensitive to the raw size of the candidate predicate space. The main reason is that EviDC does not enumerate evidence over the entire predicate space, but only expands branches reachable from the current active nodes in DCTrie. Therefore, even when the predicate space and the DC set become larger, many irrelevant branches are never expanded.

\subsubsection{\bf{Exp-5: Scalability with Increasing Data Size}}
To evaluate the scalability of EviDC on large datasets, we conduct experiments on the Tax dataset with data sizes ranging from 10k to 1M tuples. Three insertion ratios, namely 0.1\%, 1\%, and 10\%, are considered. The results are shown in Fig.~\ref{fig:tax}.

\begin{figure}
\centering
\includegraphics[width=2.5in]{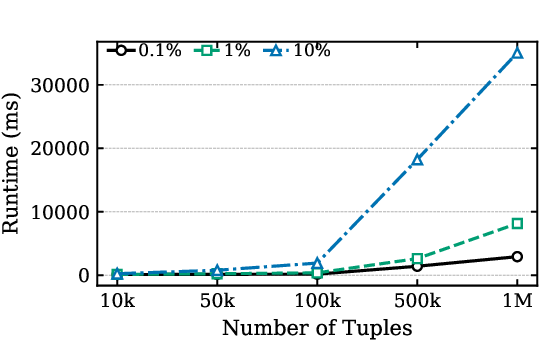}
\caption{Runtime of EviDC on Tax under different dataset sizes and insertion ratios.}
\label{fig:tax}
\end{figure}

As the dataset size increases, the runtime of EviDC exhibits an approximately linear growth trend under all insertion ratios. Even when the dataset reaches one million tuples, the algorithm remains efficient. This result indicates that, in this setting, the cost of EviDC grows primarily with the number of affected tuple pairs rather than the entire search space.

The impact of insertion ratio becomes more noticeable as the dataset grows. When the insertion ratio increases from 0.1\% to 10\%, the number of incremental tuple pairs increases significantly, leading to more evidence contexts and more DCTrie traversal operations. Nevertheless, the growth rate remains stable without a sudden performance degradation. This is because EviDC performs evidence construction only along reachable violation paths and prunes irrelevant contexts early. As a result, the algorithm reduces the intermediate structure expansion commonly observed in generate-then-verify approaches, which helps maintain scalability under increasing data sizes and larger update workloads in our experiments.

\subsubsection{\bf{Exp-6: Memory Usage Comparison}}

Figure~\ref{fig:memory_overview} shows the memory consumption of EviDC, 3DC, and IncDC across a range of datasets. The memory usage is measured in megabytes (MB) and reflects the peak memory consumption during the incremental DC discovery process.

\begin{figure}[!t]
\centering
\includegraphics[width=3.2in]{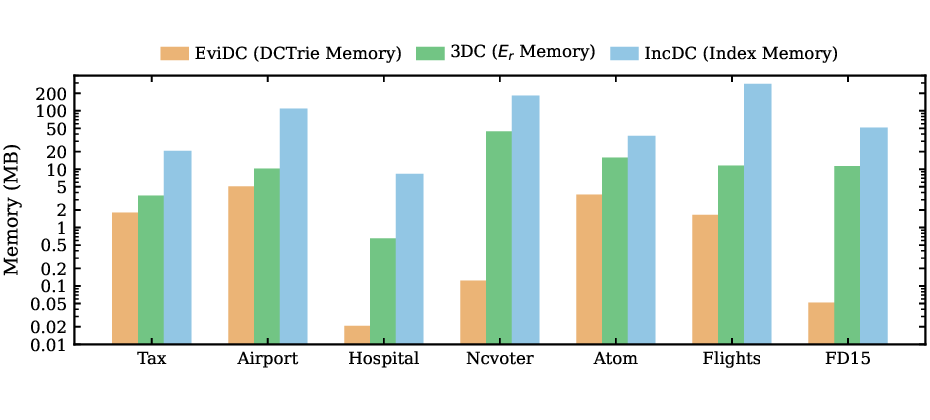}
\caption{Memory usage comparison of EviDC (DCTrie), 3DC ($E_r$), and IncDC (Index) on various datasets.}
\label{fig:memory_overview}
\end{figure}

As shown in Figure~\ref{fig:memory_overview}, EviDC achieves the lowest memory footprint across almost all datasets. This is because it does not maintain the complete historical evidence set required by 3DC* or the index structures used by IncDC. During incremental processing, EviDC stores only DCTrie and surviving evidence contexts, and contexts that cannot reach any violation path are removed immediately.

\subsubsection{\bf{Exp-7: Runtime Breakdown}}

To better understand the sources of EviDC's runtime cost, we further analyze the time distribution of different components in the algorithm. Specifically, the runtime is divided into five stages, including data reading, predicate space construction, DCTrie construction, incremental evidence construction, and DC update. Figure~\ref{fig:percentage} reports the percentage of total runtime spent in each stage on different datasets.

\begin{figure}
\centering
\includegraphics[width=3.4in]{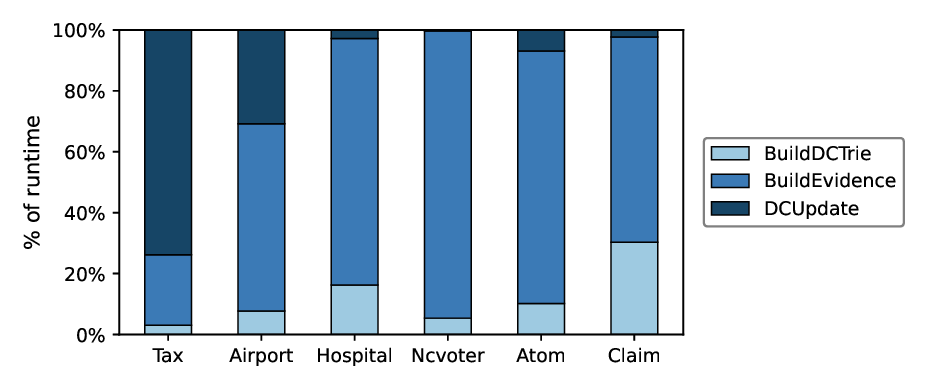}
\caption{Percentage breakdown of EviDC runtime on different datasets.}
\label{fig:percentage}
\end{figure}

Overall, the runtime mainly comes from incremental evidence construction, DCTrie construction, and data reading. Evidence construction dominates because it directly processes affected tuple pairs through predicate evaluation, context refinement, and active-node propagation. The DC update cost is relatively small because invalid DCs are located when DCTrie leaves are reached, avoiding an additional exhaustive matching step between generated evidence and existing DCs.

\subsection{Discussion}

The experimental results show that EviDC improves incremental DC discovery by using the structure of existing DCs during evidence construction. In this way, many irrelevant tuple pairs can be pruned early. This reduces both runtime and memory consumption. In our experiments, the construction of DCTrie and the predicate space only takes a small part of the total runtime.

However, EviDC is sensitive to the size and structure of the original DC set. If the DC set is very large, DCTrie construction and traversal can become more expensive. In this case, the advantage of EviDC may become smaller. When the insertion ratio grows, the benefit of EviDC becomes more evident. Therefore, EviDC is more suitable for scenarios with frequent insertions, where the DC structure can effectively guide evidence pruning. In the repair stage, EviDC adopts a single-predicate extension strategy, consistent with existing incremental DC discovery methods. This strategy aims to eliminate violations by adding more restrictions to each invalid DC, rather than exhaustively enumerating all possible repaired DCs.

\section{Conclusion} \label{section:conclusion}

This paper studies the problem of incremental denial constraint discovery. Existing methods construct incremental evidence without exploiting the structural information embedded in DCs, leading to a large amount of unnecessary intermediate results. To address this issue, we propose EviDC, a violation-guided incremental DC discovery method. EviDC organizes existing DCs as a prefix-sharing structure called DCTrie and guides evidence construction along potential violation paths, allowing irrelevant branches to be pruned early. Experiments show that EviDC significantly reduces the size of incremental evidence and achieves better efficiency in most scenarios. The advantage becomes more pronounced as the insertion ratio and dataset size increase. Future work will focus on supporting more general update operations, including tuple deletions and modifications.



\end{document}